\documentclass[journal]{IEEEtran}

\usepackage{afterpage}
\usepackage{algorithm}
\usepackage{algorithmic}
\usepackage{amscd}
\usepackage{amsfonts}
\usepackage{amsmath}
\usepackage{amssymb}
\usepackage{amsthm}
\usepackage{array}
\usepackage{balance}
\usepackage{booktabs}
\usepackage{footnote}
\usepackage{graphicx}
\graphicspath{{figures/}}
\usepackage{mathrsfs}
\usepackage{multirow}
\usepackage{pdflscape}
\usepackage{pgf-umlsd}
\usepackage{rotating}
\usepackage{stfloats}
\usepackage{subfigure}
\usepackage{tabularx}
\usepackage{wrapfig}
\usepackage{tikz}
\usetikzlibrary{shapes.geometric, arrows.meta, positioning, calc, fit, shapes.symbols, matrix, backgrounds}
\usetikzlibrary{arrows.meta, positioning}
\usetikzlibrary{calc}
\usepackage{float}
\usepackage{placeins}

\newcolumntype{C}[1]{>{\centering\arraybackslash}m{#1}}
\newcommand{\lit}[1]{\ifmmode\text{\footnotesize\textsf{#1}}\else{\footnotesize\textsf{#1}}\fi}
\newcommand{\chain}[1]{\ifmmode\text{\scriptsize\textsf{#1}}\else{\scriptsize\textsf{#1}}\fi}
\newcommand{\tablelit}[1]{\ifmmode\text{\textsf{#1}}\else{\textsf{#1}}\fi}
\newif\ifwhtbdfinal
\whtbdfinaltrue

\newif\ifrevfinal
\newcommand{\rev}[1]{#1}
\newenvironment{revblock}{}{}


\usepackage[compact]{titlesec}
\titlespacing*{\subsection}{0pt}{0.8ex plus 0.2ex minus 0.2ex}{0.4ex plus 0.1ex}

\theoremstyle{plain}
\newtheorem{lemma}{Lemma}

\newtheorem{proposition}[lemma]{Proposition}

\theoremstyle{definition}
\newtheorem{definition}{Definition}[section]

\theoremstyle{remark}

\usepackage{xr-hyper}
\usepackage[hidelinks]{hyperref}
\usepackage{xurl}



\newif\iffull
\fullfalse

\begin{document}%
\bstctlcite{ARMORbibcontrol}

\title{ARMOR: Adversary Cost Amplification via a Schema-Preserving Password-Hardening Retrofit for Resource-Constrained Systems}

\author{Avinash Srinivasan%
\thanks{This work has been submitted to the IEEE for possible publication. Copyright may be transferred without notice, after which this version may no longer be accessible.}%
\thanks{A. Srinivasan is with the Department of Cyber Science, United States Naval Academy, Annapolis, MD, USA (e-mail: srinivas@usna.edu).}%
\thanks{The views expressed in this article are those of the author(s) and do not reflect the official policy or position of the U.S. Naval Academy, Department of the Navy, the Department of War, or the U.S. Government.}}

\pagestyle{plain}

\maketitle

\begin{abstract}
Credential-database disclosure exposes Internet of Things gateways, edge servers and legacy verifiers to offline password guessing. Memory-hard key derivation functions (KDFs) are preferred but can exceed constrained-device budgets. ARMOR addresses this deployment gap with a schema-preserving verifier-side transform for database-only compromise, assuming an isolated Chain Registry. The registry holds secret per-account chains of injective operations; fixed terminal escape encoding preserves distinct pre-hash values under each chain while moving outputs outside printable ASCII. This disrupts evaluated stock tools' candidate coverage, but chain-aware attackers can generate the transformed bytes directly. We therefore evaluate a bounded chain-aware attack on $1{,}000$ paired salted-MD5 targets: it executes $89.35\times$ the plain arm's counted salted checks, reproduced on A6000 and A100. All target passwords are in the bounded pool; same-target ambiguity enumeration and simulated login trials recover every original password. These results describe the implemented attack, not minimum attacker cost. Bare-metal transforms take microseconds, while protected storage dominates the measured TrustZone-M components; these are component measurements, not complete authentication times. ARMOR thus provides bounded chain uncertainty under registry isolation with low transform cost, although an isolated high-entropy pepper remains cryptographically stronger. ARMOR composes as \lit{KDF(ARMOR(password))} when resources permit and the KDF accepts the complete transformed output.
\end{abstract}

\begin{IEEEkeywords}
Authentication, password hardening, offline cracking resistance, pre-hash transformation, schema-preserving retrofit, adversary cost amplification, IoT, embedded, and legacy systems, hardware trust boundaries, separation of privilege.
\end{IEEEkeywords}

\section{Introduction}\label{sec:intro}

Internet of Things (IoT) back ends and gateways increasingly maintain credential databases for authenticating devices and users. These stored verifiers are high-value breach targets because disclosure lets attackers guess passwords offline at their own speed. Recent large-scale credential leaks and breach-cost studies underscore the need for post-breach protections that survive stored-verifier disclosure~\cite{ibm2025costbreach,moab2024,megaleak2025,verizon2024dbir}.

This risk is pronounced for systems that still rely on fast hashes. Graphics processing unit (GPU) clusters have demonstrated hundreds of billions of NTLM guesses per second, and current password-storage guidance rejects fast general-purpose hashes such as \lit{MD5} and \lit{SHA-1}~\cite{Goodin2012Cluster,owaspPasswordStorageCheatSheet}. The standard remediation is to replace fast hashes with adaptive or memory-hard key derivation functions (KDFs), including \lit{bcrypt}, \lit{scrypt}, and \lit{Argon2}~\cite{provos1999future,percival2009stronger,biryukov2016argon2}. Yet KDF configurations that meaningfully slow GPU-accelerated offline attacks can exceed the memory, latency, or compatibility budgets of IoT, edge, embedded, and legacy systems. Section~\ref{sec:kdf_infeasibility} and Table~\ref{tab:iot_edge_platforms} quantify this constraint on the evaluated platforms. The constraint is especially important where gateways and constrained back ends authenticate large device and user populations~\cite{elhajj2019iotauthsurvey,trnka2022iotauthreview} while running on microcontroller- or edge-class hardware with little headroom for memory-hard KDFs~\cite{nist800-232,hatzivasilis2015lightphs,ometov2016cryptofeasibility}. ARMOR targets this deployment gap in environments where KDFs are the preferred remediation but cannot be provisioned at security-meaningful settings without breaking resource or compatibility constraints.

This paper presents ARMOR, a verifier-side password-hardening retrofit applied at the authentication server that stores and checks credential verifiers. Under \emph{Chain Registry} (CR) isolation, it adds a bounded per-account chain-search subproblem while preserving credential schemas in IoT and other resource-constrained systems. Building on the earlier STUMP scheme~\cite{srinivasan2015stump}, ARMOR composes $k$ lightweight byte- and string-level primitives from a public operation library $\mathcal{O}$. The assigned chain is stored in the CR, separated from the credential database (DB). \rev{The credential record remains unchanged because the CR uses the account identifier the verifier already stores.} The defender evaluates one assigned chain per login. A DB-only attacker can test other chains, but production constraints and output equivalence make the nominal chain count an upper bound rather than a work factor.

ARMOR relies on chain secrecy, not cryptographic hardness. Under Kerckhoffs' Principle~\cite{Petitcolas2025}, the algorithm and primitive library are public, and only the per-user chain is secret. If the CR is compromised, ARMOR's advantage collapses to the underlying hash or KDF. We therefore evaluate ARMOR for DB-only compromise, not as a replacement for KDFs, multi-factor authentication (MFA), or peppering. ARMOR targets offline guessing after verifier disclosure, complementing mechanisms for breach detection, attribution, and coercion response (\S\ref{sec:rw}).

Against a \textsc{naive} adversary that does not enumerate ARMOR chains (\S\ref{sec:adversary_model}), unhardened bit-diffusion chains disrupt the rule grammar for non-digit inputs ($\leq 0.10\%$ recovery) but leave a residual digit-only class, recovered up to $45.3\%$ by \lit{hashcat} rule attacks because per-byte rotation maps digits into printable ASCII. We characterize this limitation and close it with an injective escape-encoding primitive (\S\ref{sec:bounded_structural}).

An \textsc{aware} DB-only adversary knows $\mathcal{O}$ and depth $k$ but lacks the per-user chain (\S\ref{sec:adversary_model}). For this adversary, the hidden chain creates a finite conditional search problem governed by output equivalence (\S\ref{sec:results_aware}). The specified production construction draws its secret chain from the injective subset $\mathcal{O}_{\mathrm{inj}}$ and appends terminal \chain{ESC} to prevent transform-induced pre-hash equivalence and the digit-only printable bypass (\S\ref{sec:injective_escape}).

\begin{revblock}
STUMP~\cite{srinivasan2015stump}, ARMOR's direct predecessor, introduced pre-hash string manipulation as a lightweight mechanism for increasing offline-cracking cost, but selected transformations heuristically, without a formal adversary model, and demonstrated them only on a small password sample. ARMOR reworks this into a verifier-side retrofit with a keyed-chain construction, a nominal $|\mathcal{O}|^k$ chain-space model, and equivalence-aware conditional cost analysis. It adds \textsc{naive}/\textsc{aware} adversaries under Kerckhoffs' Principle, an injective escape-encoding primitive that closes the residual digit-only bypass, and large-scale evaluation against modern rule sets and constrained hardware.
\end{revblock}

ARMOR can replace \lit{Hash(password)} with \lit{Hash(ARMOR(password))}, or precede a standard KDF as
\lit{KDF(ARMOR(password))} when resources permit and the KDF accepts the complete transformed output (\S\ref{sec:armor_prehash}), without changing client interfaces, network protocols, or credential schemas for completed enrollments. Migration of existing accounts requires coordinated CR and verifier updates (\S\ref{sec:sys_flow}). Direct credential disclosure through phishing, malware, shoulder surfing, or coercion is out of scope.

The paper makes three contributions.

\begin{enumerate}
    \item \textbf{Verifier-side construction.} We specify a verifier-side retrofit with compact per-account chains held in a separate Chain Registry, bytewise transforms, and no required online auxiliary service. Completed enrollments add no credential-record fields. The production construction combines injective operations, a fixed terminal escape encoding, and enrollment bounds to preserve distinct pre-hash values for distinct passwords under each assigned chain and limit output expansion.

    \item \textbf{Systems evaluation.} We evaluate attacks on NVIDIA A100, RTX A6000 and RTX~PRO~6000 Blackwell GPUs and measure transform costs on a Linux edge platform and three microcontroller platforms. A bounded production-chain attack on A6000 and A100 reproduces the same counted salted work and account recovery ranks; same-target enumeration supports simulated login-valid recovery. The bare-metal measurements establish microsecond-scale transform minima. In a TrustZone-M prototype, separate measurements of protected storage, compute, and boundary components identify protected-storage access as the dominant measured secure-service cost.

    \item \textbf{Adversary analysis and validation.} Under DB-only compromise with the Chain Registry isolated, we distinguish \textsc{naive} attacks using unmodified tools from \textsc{aware} chain search. RockYou experiments characterize the digit-only bypass and evaluate escape encoding. An output-equivalence analysis quantifies expected output-class tests given the correct base password under a unit-cost model. Enrollment simulations demonstrate why marginally uniform chain assignment alone does not ensure unpredictability.
\end{enumerate}

Appendices A--H provide additional methodological details, tables, figures, and bounded-attack results. Table and figure numbering continues from the main text.

\section{Background and Related Work}\label{sec:rw}

The fundamental defense against password theft, dating back to Morris and Thompson's Unix \lit{crypt(3)} function, is to make offline guessing expensive through deliberately slow password hashing~\cite{morris1979password}. PKCS~\#5 v2.0 formalized this principle through \lit{PBKDF2} (RFC~2898~\cite{rfc2898}; formally analyzed by Yao and Yin~\cite{yao2005design}), and later KDFs such as \lit{bcrypt}, \lit{scrypt}, \lit{Argon2}, \lit{Balloon}, and \lit{Makwa} strengthened or adapted this line of defense~\cite{provos1999future,percival2009stronger,biryukov2016argon2,boneh2016balloon,pornin2015makwa,hatzivasilis2017phstatus}. Yet catastrophic server-side breaches continue, and aggregated mega-leaks such as the Mother of All Breaches (MOAB)~\cite{moab2024} and recent infostealer disclosures~\cite{megaleak2025} amplify credential-stuffing and password-reuse risk. These KDFs operate as the hash primitive itself, whereas ARMOR is orthogonal to it, hardening the input fed to whatever underlying hash a deployment uses, including lightweight constructions such as \lit{Ascon-Hash256} now standardized for constrained devices~\cite{nist800-232}.

Server-side defenses are also constrained by human password choice. Complex composition policies often backfire~\cite{komanduri2011passwords}, and large-scale studies show that password predictability remains exploitable~\cite{florencio2007large,bonneau2012science}. Probabilistic grammars, rule-based mangling, neural guessers, Monte-Carlo estimators, and learned rule sets all exploit this structure~\cite{Peslyak1996JTR,weir2009password,melicher2016fast,dellamico2015montecarlo,luo2026adrg}, while strength meters and guessing metrics attempt to reduce user burden~\cite{ur2016supporting,komanduri2016modeling}.

Human-chosen passwords follow skewed statistical distributions, with real-world passwords obeying Zipf-like laws across datasets and languages~\cite{wang2017zipf,hou2022zipf}. Small wordlists are therefore disproportionately effective, and structural attacks outperform naive dictionary enumeration, as Tatli demonstrated for rule-based cracking~\cite{tatli2015cracking}, motivating our rule-file evaluation.

GPU cracking illustrates the exposure of disclosed verifiers. In 2012, a 25-GPU cluster could test 350 billion NTLM hashes per second~\cite{Goodin2012Cluster}. Cracked credentials enable credential stuffing across services~\cite{nisenoff2023measuring}. Akamai recorded 193 billion credential-stuffing attacks during 2020~\cite{akamai2021gaming}, and its 2025 Defenders' Guide describes botnets automating such campaigns~\cite{akamai2025defenders}. These reports concern online credential reuse. ARMOR targets offline password guessing after verifier disclosure and complements breach alerting and credential-stuffing detection~\cite{thomas2019protecting,wang2020detecting}.

Related hardening mechanisms include MFA-bound credential hashing and password-hardening services. Nair and Song's \emph{Multi-Factor Credential Hashing Function} (MFCHF) derives resistance from an MFA witness and reports $20$--$160$ added bits of entropy~\cite[\S~7]{multifactorhashing2023}. Password-hardening services and password-hardened encryption (PHE), including \textit{Pythia}, partially oblivious commitments, \textit{Phoenix}, simple PHE, and threshold PHE, inject an external secret so that a leaked verifier is provably hard to crack offline~\cite{everspaugh2015pythia,schneider2016pommitments,lai2017phoenix,lai2018simple,brost2020threshold}. These stronger guarantees require an online auxiliary service, an additional login round trip, and sometimes distributed key management. ARMOR is not a cryptographic equivalent. It adds a local, per-user keyed transformation-chain search problem without an auxiliary service or changes to client interfaces, protocols, or completed credential records. The median conditional guessing number is approximately $52$ output-class tests at $k{=}3$ and $317$ at $k{=}5$. These unit-cost values assume the correct base password and do not measure the cost of jointly searching passwords and chains. ARMOR therefore complements embedded, legacy, intermittently connected, and resource-constrained deployments where MFA hardware, online hardening services, or memory-hard KDFs cannot be provisioned.

ARMOR also differs from conventional global peppering, which adds one shared secret to password hashing. ARMOR assigns a per-account chain. The historical 14-operation evaluation has $|\mathcal{O}|^k$ nominal sequences, while the production validator and output equivalence reduce the applicable conditional search cost (\S\ref{sec:results_aware}). Enclave-based variants such as \textit{SafeKeeper} place a keyed secret in a server-side trusted execution environment (Intel SGX) with a client-to-enclave attested channel~\cite{krawiecka2018safekeeper}. ARMOR's measured Pico~2W prototype instead keeps per-account chains in TrustZone-M secure storage (\S\ref{sec:results_secure_service}). This is a finite chain search problem under CR isolation, not the cryptographic protection of a high-entropy pepper.

Client-side deterministic generators such as \textit{PwdHash}, \textit{Hashpass}, \textit{LessPass}, and \textit{MasterPassword} mitigate password reuse without a stored vault~\cite{ross2005pwdhash,Chestnykh2010Hashpass,lesspass,masterpassword}, but a stolen site hash can enable offline recovery of the master password~\cite{llewellyn2017cracking}. ARMOR instead operates verifier-side after password submission and preserves client interfaces, protocols, and schemas.

Authentication for constrained and IoT devices is related but distinct because it targets device authentication and key exchange rather than hardening leaked human-password verifiers~\cite{elhajj2019iotauthsurvey,trnka2022iotauthreview,luo2021g2f}. Physical-unclonable-function protocols also bind identity to device hardware rather than to a stored verifier~\cite{almeer2023pufiot,gupta2023pufprotocol}. IoT gateways and edge back ends that store and verify credentials nonetheless present a distinct attack surface~\cite{lin2017iotsurvey,sha2020edgeiotsecurity}, and large-scale default-credential compromises such as the Mirai botnet underscore the stakes of protecting these stores~\cite{antonakakis2017mirai,enisa2020iotguidelines}.

Other post-breach defenses target adjacent problems: panic passwords address coercion~\cite{clark2008panic}, while honeytokens, honeywords, and \textit{Amnesia} pursue decoy-based breach detection and attribution~\cite{spitzner2003insider,juels2013honeywords,wang2021amnesia}. None raise the per-account cost of an offline rule-based attack on fast-hash credential stores. Tihanyi et al.\ show that hashes can be cracked without revealing the hash or plaintext~\cite{tihanyi2024privacy}, reinforcing the need for mechanisms that structurally degrade cracking efficacy rather than merely protect hash confidentiality.

\section{Foundations of Modern Password Protection}
\label{sec:foundations}

Server-side password storage combines \lit{salt}s, \lit{pepper}s, and KDFs for uniqueness, secrecy, and computational cost, respectively (Tables~\ref{tab:ARMOR_vs_all} and \ref{tab:ARMOR_vs_kdf_comprehensive}). ARMOR targets systems where constrained hardware cannot run KDFs at security-meaningful settings.

\begin{enumerate}
    \item \textbf{Salting} adds a unique random value before hashing, preventing precomputed attacks such as rainbow-table lookups. The \lit{salt} is stored with the hash. Its protection comes from uniqueness, not secrecy.

    \item \textbf{Peppering} adds a static secret before hashing, stored separately from the credential DB (e.g., firmware). DB-only compromise is insufficient without the \lit{pepper}, but \lit{pepper} recovery restores baseline attack surface.

    \item \textbf{KDFs}, including \lit{bcrypt}, \lit{scrypt}, and \lit{Argon2}, add configurable CPU and memory cost to resist parallel guessing.

\end{enumerate}

\subsection{When KDFs Cannot Be Deployed}
\label{sec:kdf_infeasibility}

We use \lit{Argon2id} as the representative memory-hard KDF, recommended by RFC~9106~\cite{rfc9106} and OWASP~\cite{owaspPasswordStorageCheatSheet}, with the standard $(m,t,p)$ parameters for \textit{memory cost}, \textit{time cost}, and \textit{parallelism}. On constrained IoT and edge hardware, memory-hard KDFs often exceed available memory, while iteration-hard KDFs can fit only at latency/security tradeoffs.

\textbf{1. Memory-hard KDFs exceed available memory.} We evaluate OWASP's recommended $(m,t,p)=(19\,\mathrm{MiB},2,1)$ profile~\cite{owaspPasswordStorageCheatSheet}. RFC~9106~\cite{rfc9106} specifies $(2\,\mathrm{GiB},1,4)$ as its default and $(64\,\mathrm{MiB},3,4)$ as its memory-constrained profile. The three evaluated microcontroller boards provide only $64$--$520$\,KiB total SRAM (Table~\ref{tab:iot_edge_platforms}). This $19$\,MiB profile exceeds those capacities by approximately $37$--$304\times$. On the application-class Cortex-A53, we measured $\approx 424$\,ms per authentication at this $(19\,\mathrm{MiB},2,1)$ profile and $\approx 1{,}445$\,ms at $m{=}64$\,MiB. OWASP also recommends lower-memory profiles at increased time cost.

\textbf{2. Iteration-hard KDFs fit but give limited GPU resistance.} \lit{bcrypt}~\cite{provos1999future} has a $\approx 4$\,KiB working set and runs on Cortex-M hardware, yet GPU and parallel-hardware implementations reach $\approx 10^{5}$ hashes per second per device~\cite{malvoni2014bcrypt,pervan2019distributed}, so low cost factors give little protection. Raising \lit{bcrypt}'s cost factor $c$ (the base-2 logarithm of the iteration count) is slow even on the Cortex-A53 ($\approx 310$\,ms at $c{=}10$, the OWASP minimum~\cite{owaspPasswordStorageCheatSheet}, and $\approx 1.24$\,s at $c{=}12$). These latencies conflict with concurrent IoT and edge workflows (multiple devices, machine-to-machine exchange, sub-second handover), leaving operators to run fast hashes and absorb post-breach risk.

\subsection{ARMOR: A Composable Pre-Hash Layer}
\label{sec:armor_prehash}

No single KDF profile spans the full embedded spectrum. ARMOR instead applies a deterministic, per-user keyed sequence to the hash function's input. Where a security-meaningful KDF cannot run, ARMOR can add post-breach protection alongside salting. If a KDF accepts the full transformed output, \lit{KDF(ARMOR(password))} retains its per-hash cost while \rev{adding a hidden-chain search subproblem}. This condition matters for \lit{bcrypt}: its 72-byte input limit is incompatible with the evaluated 128-byte password limit even when the chain has no length-doubling operation, because the terminal escape encoding, \chain{ESC} (defined in \S\ref{sec:injective_escape}), doubles the output (Eq.~\ref{eq:lenbound}). Direct \lit{bcrypt} composition would require a separately validated shorter input profile. Section~\ref{sec:ARMOR} defines the transformation hierarchy, keyed chains, and their \rev{$|\mathcal{O}|^k$ nominal space}.

\section{The ARMOR Framework: A Tiered Pre-Hash Transformation Model}\label{sec:ARMOR}
This section formalizes ARMOR's pre-hash transformation model. Unlike STUMP~\cite{srinivasan2015stump}, which selected transformations heuristically, ARMOR organizes primitives into a three-layer hierarchy and specifies the keyed-chain model that yields the nominal $|\mathcal{O}|^k$ adversary search space.

\subsection{ARMOR System Design Requirements}
\label{sec:design_requirements}
ARMOR is governed by five design requirements spanning \textit{schema transparency} (DR-1), \textit{functional integrity and extensibility} (DR-2), \textit{adversary cost amplification} (DR-3), \textit{separation of compromise domains} (DR-4), and \textit{resource compatibility} (DR-5). Table~\ref{tab:design_requirements} maps each requirement to its supporting analysis or evidence.

\subsection{The Three-Layer Transformation Hierarchy}
\label{sec:hierarchy}
ARMOR organizes the primitives in Table~\ref{tab:armor_primitives} into structural transforms, bit-level diffusion, and keyed primitive chains.

\subsubsection{Layer 1: Structural Character Transforms.} These primitives rewrite the password's character-level representation through case mapping, permutation, or expansion. They provide structural disruption and chain-space diversity. Examples include \chain{REV}, \chain{U-CASE}, \chain{L-CASE}, \chain{TOG}, and \chain{RFL}.

\subsubsection{Layer 2: Bitwise Diffusion.} Per-byte circular bit rotations (\chain{BSL}, \chain{BSR}) rotate each byte's eight bits left or right by one position, altering its ordinal value while preserving length.

\textbf{Caveat: digit-only bypass class.}
At the byte level, \chain{BSL} maps printable bytes \lit{0x20}--\lit{0x3f} back into printable ASCII (\lit{0x40}--\lit{0x7e}). Digits \lit{0x30}--\lit{0x39} therefore remain inside the grammar modeled by standard rule sets. We evaluate the digit-only subset ($4$--$19\%$ of length-filtered RockYou, depending on ranked-corpus window), with unhardened rule attacks recovering up to $45.3\%$ of it and at most $0.10\%$ of the evaluated non-digit targets. We suppress this bypass with high-bit \chain{BSL-X}/\chain{BSR-X} variants and close it in production with the injective escape primitive \chain{ESC} (\S\ref{sec:bounded_structural}).

\subsubsection{Layer 3: Keyed Primitive Chains.}
Layer 3 makes the \emph{selection and ordering} of primitives a per-user secret. From the public library $\mathcal{O}$ of $|\mathcal{O}|=14$ primitives (Table~\ref{tab:armor_primitives}) and fixed \emph{chain depth} $k$ (primary evaluated operating point $k=3$), the candidate chain space is the ordered $k$-fold product\vspace{-2mm}
\[
  \vspace{-2mm}
  \mathcal{O}^{k}=\{(o_1,\dots,o_k):o_i\in\mathcal{O},\ i=1,\dots,k\},\quad
  |\mathcal{O}^{k}|=|\mathcal{O}|^{k}.
\]
At $k=3$, the $14$-primitive evaluation library yields $14^{3}=2{,}744$ ordered chains. Let $P$ be the submitted password. A user's \emph{chain} $K\in\mathcal{O}^{k}$ is one such ordered sequence, and $T_K(P)$ denotes its output in chain order:
\vspace{-1mm}
\begingroup\small
\begin{equation}\label{eq:chain}
\vspace{-2mm}
  K=(op_1,\dots,op_k),\quad T_K(P)=(op_k\circ op_{k-1}\circ\cdots\circ op_1)(P).
\end{equation}
\endgroup

\rev{Each user's chain $K$ is held solely by the CR and is retrieved under the immutable internal account identifier \lit{user\_id} that the credential record already carries. ARMOR therefore adds no field to the credential DB. The stored tuple remains the pre-existing $\langle\lit{user\_id},\lit{hash},\lit{salt}\rangle$, and the CR alone holds the mapping $\lit{user\_id}\mapsto K$. Deployments must key the CR on a stable internal identifier, ordinarily the surrogate primary key, rather than on a mutable handle such as a username or email address, so that renaming an account cannot orphan its chain. Keying the CR by account rather than by an unlinkable random handle means that a CR-only exposure reveals which account holds which chain. Under the compromise model of \S\ref{sec:security_analysis}, an adversary holding the CR but not the verifier DB has no stored hashes to attack, and one holding both already has every chain. A DB-only compromise contains no direct account-to-chain mapping. Observations from an adversary's own chosen-password enrollments do not determine an independently assigned victim chain (\S\ref{sec:adversary_model}).} The adversary may evaluate up to $|\mathcal{O}|^{k}$ chains per password candidate. The historical chain-expanded evaluation enumerates this full library without validator pruning, whereas production chains are drawn from the validator-accepted set $\mathcal{V}_k$ (\S\ref{sec:chain_constraints}). The nominal $|\mathcal{O}|^{k}$ bound, evaluation-library chain equivalence, and production-set conditional guessing number $G_k$ are measured separately (\S\ref{sec:combinatorial_compounder}, \S\ref{sec:results_aware}).

\begin{table}[!hb]
\vspace{-3mm}
\renewcommand{\arraystretch}{1.15}
\caption{ARMOR Transformation Primitive Library}
\label{tab:armor_primitives}
\centering
\footnotesize
\setlength{\tabcolsep}{4pt}
\begin{tabular}{@{}c l l >{\raggedright\arraybackslash}p{1.75in}@{}}
\toprule
\textbf{Layer} & \textbf{Primitive} & \textbf{Category} & \textbf{Description} \\
\midrule
1 & \chain{L-CASE}  & Structural-1  & All characters to lowercase. \\
1 & \chain{U-CASE}  & Structural-1  & All characters to uppercase. \\
1 & \chain{CAP}     & Structural-1  & Uppercases first character, lowercases rest (ASCII letters). \\
1 & \chain{CAP-INV} & Structural-1  & Lowercases first character, uppercases rest (ASCII letters). \\
1 & \chain{TOG}     & Structural-1  & Inverts case of every character. \\
1 & \chain{REV}     & Structural-2  & Reverses the string. \\
1 & \chain{ROT-L}   & Structural-2  & Rotates characters left by one. \\
1 & \chain{ROT-R}   & Structural-2  & Rotates characters right by one. \\
1 & \chain{RFL}     & Structural-3  & Appends reversed string to itself. \\
1 & \chain{DUP}     & Structural-3  & Concatenates string with itself. \\
2 & \chain{BSL}     & Diffusion-1 & Per-byte circular shift left one bit. \\
2 & \chain{BSR}     & Diffusion-1 & Per-byte circular shift right one bit. \\
2 & \chain{BSL-X} & Diffusion-2  & \chain{BSL-X}: high-bit OR ($\lor\,\chain{0x80}$) after \chain{BSL}.$^{\ast}$ \\
2 & \chain{BSR-X} & Diffusion-2 & \chain{BSR-X}: high-bit OR ($\lor\,\chain{0x80}$) after \chain{BSR}.$^{\ast}$ \\
2 & \chain{ESC}     & Diffusion-3 & Escape-encode each byte to two non-printable bytes: $b\!\mapsto\!(\chain{0x80}\!\mid\! b{\gg}4,\ \chain{0x80}\!\mid\! b\,\&\,\chain{0xf})$. \\
\midrule
\multicolumn{4}{>{\raggedright\arraybackslash}p{\dimexpr\linewidth-2\tabcolsep\relax}}{\scriptsize \textit{Note:} The $14$-primitive evaluation library excludes \chain{ESC}. The non-injective case maps and starred high-bit variants \chain{BSL-X}/\chain{BSR-X} are evaluation-only. \chain{ESC} is the fixed terminal production escape (\S\ref{sec:injective_escape}).}\\
\bottomrule
\end{tabular}
\vspace{-2mm}
\end{table}

\subsection{Deployment Model}
\label{sec:deployment_model}
\label{sec:core_app}
ARMOR follows a split-knowledge deployment model aligned with credential-protection guidance, including NIST SP~800-57~\cite{nist800-57}, NIST SP~800-63B~\cite{nist800-63b}, and FIPS~140-3~\cite{fips140-3} when the CR is hardware backed. Per DR-4, the credential DB and CR reside in independent compromise domains.

ARMOR targets IoT gateways, edge back ends, legacy enterprise stacks with fixed verifier formats, and embedded deployments where memory-hard KDFs exceed resource budgets. Per DR-1, it inserts as a verifier-side application-layer pre-processing step, leaving verifier tuples and client protocols unchanged. The CR may reside in firmware, protected memory, a trusted execution environment (TEE) or TrustZone-protected region, a hardware security module (HSM), or an OS-level keystore. On embedded devices, this split can place the DB in data flash and the CR in firmware-protected instruction flash. \rev{The specified production construction composes primitives from the injective sublibrary $\mathcal{O}_{\mathrm{inj}}$ and then applies a fixed terminal \chain{ESC} to close the digit-only bypass (\S\ref{sec:injective_escape}). Depth is selected against the conditional guessing number $G_k$ of \S\ref{sec:results_aware} and the length and cost bound of \S\ref{sec:chain_constraints}. We evaluate depths $k{=}1$--$5$, using $k{=}3$ as the primary comparison point. Selecting a cost-optimal depth requires a deployment-specific assessment beyond the present evaluation.} Already-deployed hardware may provision the CR through an over-the-air (OTA) firmware update if it supports one and the prerequisites of \S\ref{sec:security_analysis} hold. Otherwise, provisioning requires factory setup, physical reflash, or scheduled service. Provisioning across multiple deployed devices was not evaluated.

\subsection{Defender-Side Registration and Authentication}
\label{sec:sys_flow}
ARMOR applies the identical keyed transformation at both registration and authentication. The flows differ only in the final step, storing the verifier at registration and comparing against it at authentication. Algorithm~\ref{alg:armor_defender} specifies both, and Fig.~\ref{fig:auth_flow} gives the authentication-flow diagram. During registration, \rev{the CR allocates a DR-2--valid chain $K$ and stores it under the user id $u$}. The verifier computes $P' = \textsc{armor}(P,K)=\chain{ESC}(T_K(P))$ from password $P$ (Eq.~\ref{eq:prod}). Indexed by \rev{$u$}, it stores the verifier tuple \rev{$\langle h,\, salt\rangle$}, where $h = H(P' \,\|\, salt)$ is the stored hash and $salt$ is a fresh per-user salt. The chain $K$ stays exclusively in the CR (DR-4)\rev{, and no ARMOR-specific field is added to the credential record}.

\begin{revblock}
\textbf{Deployment onto an existing store.} A retrofit must also handle accounts enrolled before ARMOR was deployed, whose stored $h$ was computed without a chain. On a successful authentication for such an account, the verifier holds $P$ in memory for that request. It can assign $K$ to $u$ and rewrite $h \gets H(\textsc{armor}(P,K)\,\|\,salt)$, retaining the existing salt or generating a fresh one. The plaintext need not be persisted, and a completed migration changes the same verifier fields as a routine password change. The CR entry and DB verifier are updated separately. A process or system failure between these writes can leave them inconsistent and prevent authentication. Reliable migration therefore requires coordinated updates and a recovery procedure, which this study does not specify or evaluate. A CR entry alone does not establish that migration completed or that an interrupted update can be safely repeated.

Two policy points bound the transition. Accounts that never authenticate do not migrate, so a deployment needs a cutover policy for dormant accounts, such as a required password reset. The legacy path must also have a deadline because accounts with no CR entry still use their original verifier before migration. After the deadline, a missing chain must be treated as an error. An account with no migration writes still has its original verifier, while an interrupted two-store update requires the recovery procedure noted above.
\end{revblock}

\iffull
\subsubsection{Registration}
\rev{The verifier requests a chain for the account, which the CR assigns and stores under \lit{user\_id},} and transforms the chosen password $P$ into $P'$ (Eq.~\ref{eq:prod}). It stores only the verifier tuple \rev{$\langle\lit{user\_id}, \lit{hash}, \lit{salt}\rangle$}, where $\lit{hash}=H(P'\|\lit{salt})$\rev{, which is the schema the deployment already had}. The chain $K$ remains exclusively in the CR (DR-4).

\subsubsection{Authentication}
Given a submitted password $P$, the verifier loads the stored \lit{salt}, \rev{retrieves $K$ from the CR under \lit{user\_id}}, recomputes $P'=\textsc{armor}(P,K)$, and accepts only if $H(P'\|\lit{salt})$ matches the stored verifier.
\fi

\begin{figure*}[!t]
\centering
\scalebox{0.9}{
\begin{tikzpicture}[
    startstop/.style={rectangle, rounded corners=4pt,
        minimum width=1.8cm, minimum height=0.6cm,
        text centered, draw=black, fill=gray!20, font=\small},
    process/.style={rectangle, rounded corners=2pt,
        minimum width=2.0cm, minimum height=0.6cm,
        text centered, draw=black!70, fill=blue!10, font=\small},
    armor/.style={rectangle, rounded corners=2pt,
        minimum width=2.0cm, minimum height=0.6cm,
        text centered, draw=red!60!black, fill=red!10,
        font=\small\bfseries},
    decision/.style={diamond, aspect=2.4,
        minimum width=2.8cm, minimum height=0.9cm,
        text centered, draw=black!70, fill=yellow!15, font=\small},
    outcome/.style={rectangle, rounded corners=4pt,
        minimum width=2.2cm, minimum height=0.6cm,
        text centered, draw=black!60, font=\small},
    chainbox/.style={rectangle, rounded corners=2pt,
        minimum width=3.0cm, minimum height=0.6cm,
        text centered, draw=orange!70, fill=orange!8, font=\footnotesize},
    db/.style={cylinder, shape border rotate=90,
        minimum width=2.4cm, minimum height=0.85cm,
        text centered, draw=green!50!black, fill=green!12,
        font=\small, aspect=0.25},
    recbox/.style={rectangle, rounded corners=2pt,
        minimum width=3.6cm, minimum height=0.6cm,
        text centered, draw=gray!50, fill=gray!12, font=\footnotesize},
    arrow/.style={-{Stealth[length=7pt]}, thick},
    dasharrow/.style={-{Stealth[length=7pt]}, very thick, dashed, gray!80},
    layerlabel/.style={font=\footnotesize\bfseries, black!60},
]

\node (lookup) [process]                          {Lookup Record};
\node (user)   [startstop, above=1.2cm of lookup] {User};
\node (chain)  [process,  right=1.82cm of lookup] {Retrieve $K$};
\node (armor)  [armor,    right=1.1cm of chain]   {ARMOR($P$, $K$)};
\node (hash)   [process,  right=1.1cm of armor]   {$H(P'\|\lit{salt})$};
\node (cmp)    [decision, right=1.4cm of hash]    {Match?};

\node (ok)     [outcome,  above=1.2cm of cmp,
                fill=green!15, draw=green!60!black] {Authenticated};

\node (fail)   [outcome,  below=1.2cm of cmp,
                fill=red!15, draw=red!60!black]    {Failed};

\node (db)     [db,     below=1.2cm of lookup]     {Credential DB};
\node (rec)    [recbox, right=0.8cm of db]
    {\textbf{Record:} \lit{user\_id}, \lit{hash}, \lit{salt}};

\node (cmap)   [chainbox, above=1.2cm of chain]
    {\textbf{Chain Map:} \rev{\lit{user\_id}}$\to$sequence ($K$)};

\draw[arrow] (user)   --
    node[right, font=\footnotesize, midway] {$user\_id$, $P$} (lookup);
\draw[arrow] (lookup) --
    node[above, font=\footnotesize, midway] {\lit{salt}}
    node[below, font=\footnotesize, midway] {\rev{\lit{user\_id}}} (chain);
\draw[arrow] (chain)  --
    node[above, font=\footnotesize, midway] {$K$} (armor);
\draw[arrow] (armor)  --
    node[above, font=\footnotesize, midway] {$P'$} (hash);
\draw[arrow] (hash)   --
    node[above, font=\footnotesize, midway] {\lit{digest}} (cmp);

\draw[arrow, green!60!black, thick]
    (cmp.north) -- node[right, font=\footnotesize] {Yes} (ok.south);
\draw[arrow, red!70, thick]
    (cmp.south) -- node[right, font=\footnotesize] {No}  (fail.north);

\draw[arrow] (db) --
    node[right, font=\footnotesize, midway] {\textbf{record}} (lookup);

\draw[arrow] (db.east) -- (rec.west);
\draw[dasharrow] (cmap) -- (chain);

\begin{scope}[on background layer]
\node[fill=orange!5, draw=orange!40, dashed, rounded corners=6pt,
          inner sep=5pt, fit=(cmap)] (cfgbox) {};
\node[layerlabel, above=1pt of cfgbox] {Chain Registry};
\end{scope}

\begin{scope}[on background layer]
\node[fill=blue!4, draw=blue!25, dashed, rounded corners=6pt,
          inner sep=6pt,
          fit=(lookup)(chain)(armor)(hash)(cmp)]
        (authbox) {};
\node[layerlabel, above=1pt of authbox] {Verifier (Auth.\ Server)};
\end{scope}

\begin{scope}[on background layer]
\node[fill=green!4, draw=green!30, dashed, rounded corners=6pt,
          inner sep=5pt, fit=(db)(rec)] (dbbox) {};
\node[layerlabel, below=1pt of dbbox] {Credential DB};
\end{scope}

\end{tikzpicture}
}
\caption{ARMOR authentication flow. The verifier retrieves the stored credential record from the database and keyed sequence $K$ from the CR (e.g., firmware, TEE, or HSM), transforms $P$ into $P'$, and compares the digest against the stored verifier.}
\label{fig:auth_flow}
\vspace{-4mm}
\end{figure*}

\begin{algorithm}[!t]
\caption{\small ARMOR Defender: Registration and Authentication.}
\label{alg:armor_defender}
\footnotesize
\begin{algorithmic}[1]
\REQUIRE public library $\mathcal{O}$ (production: $\mathcal{O}_{\mathrm{inj}}$) depth $k$;
\STATE \textbf{procedure} \textsc{Register}(\chain{user\_id}, password $P$)
\STATE \quad $K \gets \mathrm{CR.AssignValidChain}(\chain{user\_id}, \mathcal{O}_{\mathrm{inj}}, k)$ \\ \hspace{10mm}\COMMENT{\rev{stored in the CR under \chain{user\_id}}; $K$ is DR-2--valid (\S\ref{sec:chain_constraints})}
\STATE \quad \chain{salt} $\gets$ fresh random \chain{salt}
\STATE \quad $\mathrm{DB}[\chain{user\_id}] \gets \langle\, H(\textsc{armor}(P,K)\,\|\,\chain{salt}),\; \chain{salt}\,\rangle$
\STATE \textbf{procedure} \textsc{Authenticate}(\chain{user\_id}, password $P$)
\STATE \quad $\langle h,\, \chain{salt}\rangle \gets \mathrm{DB}[\chain{user\_id}]$
\STATE \quad $K \gets \mathrm{CR}[\chain{user\_id}]$
\STATE \quad \textbf{return} $H(\textsc{armor}(P,K)\,\|\,\chain{salt}) = h$
\end{algorithmic}
\end{algorithm}

\subsection{Chain Construction Constraints}
\label{sec:chain_constraints}

At enrollment, ARMOR rejects chains containing \emph{adjacent} operations that cancel. The constrained-inverse set $\mathcal{C}$ contains \lit{(BSL, BSR)} and \lit{(ROT-L, ROT-R)}, so $|\mathcal{C}|=2$. The validator also rejects adjacent repetitions of the self-inverse operations \chain{REV} and \chain{TOG}, namely \chain{REV}\,\chain{REV} and \chain{TOG}\,\chain{TOG}.

These local checks do not eliminate every algebraic equivalence. Non-adjacent combinations can still collapse (e.g.\ \chain{REV}\,\chain{ROT-L}\,\chain{REV} equals one \chain{ROT-R}), and distinct accepted chains can share outputs. The validator accepts the non-adjacent degeneration \lit{T4-BC3} examined in \S\ref{sec:t4_anomaly}. The conditional guessing number $G_k$ (\S\ref{sec:results_aware}, Table~\ref{tab:guessing_number}) accounts for these equivalences by grouping accepted chains with identical outputs.

The production set $\mathcal{V}_k\subseteq\mathcal{O}_{\mathrm{inj}}^{\,k}$ contains the chains from the production library of Eq.~(\ref{eq:Oinj}) that satisfy both the adjacency rules above and the enrollment length bound $e(K)\le e_{\max}=3$ of Eq.~(\ref{eq:lenbound}). Here $e(K)$ counts the length-doubling \chain{DUP} and \chain{RFL} steps. Every reported $|\mathcal{V}_k|$ is the size of this jointly constrained set. The expansion-count limit removes additional chains only at $k\ge4$. Section~\ref{sec:operational_guarantees} defines $G_k$ and its bounds for this production set.

Both checks take $O(k)$ at enrollment and determine which chains the CR may assign. The Trusted Firmware-M (TF-M) prototype also revalidates each retrieved chain during authentication. Table~\ref{tab:secure_service} excludes that validation work from its timed components. Each new primitive registered under DR-2 must declare inverse relationships before deployment.

\begin{revblock}
\textbf{Length safety and bounded work.} \chain{DUP} and \chain{RFL} each double their input. Production then applies the fixed terminal escape \chain{ESC} (defined in \S\ref{sec:injective_escape}), which doubles the length again. Writing $e(K)$ for the number of length-doubling operations in $K$, a password of $L_{\mathrm{pw}}$ bytes produces
\vspace{-1mm}
\begin{equation}\label{eq:lenbound}
\vspace{-1mm}
L_{\mathrm{out}} \;=\; L_{\mathrm{pw}}\cdot 2^{\,e(K)+1} \;\le\; \min(B,\,L_{\mathrm{KDF}}),
\end{equation}
Here the $+1$ accounts for the fixed terminal \chain{ESC}, $B$ is the transformed-value capacity, and $L_{\mathrm{KDF}}$ is the downstream input limit. A KDF that truncates its input would discard assigned transform output. Production ARMOR sets $P_{\max}=128$ B and $e_{\max}=3$, giving $B_{\min}=P_{\max}2^{\,e_{\max}+1}=2048$ B. The production path rejects inputs longer than 128 bytes before transformation. The CR checks Eq.~(\ref{eq:lenbound}) at enrollment in $O(k)$, alongside the adjacency rules. A chain that would overflow $B$ for any admissible password is invalid and is never assigned. These bounds prevent unbounded work from transform expansion on a request. No operation is skipped at run time, since that would evaluate a different chain and reject the account's own legitimate password.
\end{revblock}
\subsection{Injective Universal Escape}
\label{sec:injective_escape}
The terminal production hardening primitive must satisfy two properties. First, it must be injective, so distinct plaintexts under the same fixed chain have distinct ARMOR pre-hash values (\S\ref{sec:auth_equivalence}). Second, it must move outputs outside the printable-ASCII grammar that cracking rule sets exploit (\S\ref{sec:bounded_structural}). The high-bit hardened primitives each satisfy only one property, since high-bit OR is escaping but non-injective, while high-bit XOR is injective but not universally escaping. The escape primitive \chain{ESC} resolves this tradeoff through length-expanding, non-printable encoding (Table~\ref{tab:armor_primitives}; Appendix~\ref{app:II}). \rev{The specified production construction therefore fixes \chain{ESC} as a \emph{terminal} escape applied to every transformed password, and draws the secret chain from the eight non-terminal injective operations}
\vspace{-1mm}
\begin{equation}\label{eq:Oinj}
\vspace{-1mm}
\mathcal{O}_{\mathrm{inj}}=\{\chain{REV},\chain{ROT-L},\chain{ROT-R},\chain{RFL},\chain{DUP},\chain{TOG},\chain{BSL},\chain{BSR}\},
\end{equation}
\rev{with $|\mathcal{O}_{\mathrm{inj}}|=8$, so the production construction is}
\vspace{-1mm}
\begin{equation}\label{eq:prod}
\vspace{-1mm}
P' \;=\; \chain{ESC}\bigl(T_K(P)\bigr), \qquad K \in \mathcal{V}_k \subseteq \mathcal{O}_{\mathrm{inj}}^{\,k},
\end{equation}
\rev{where $T_K$ applies the secret chain of Eq.~(\ref{eq:chain}) and \chain{ESC} is fixed rather than selected. The secret is $K$ alone: \chain{ESC} is public and always present, so it contributes hardening but no chain-space entropy. This is the construction all production claims refer to, and the one measured in \S\ref{sec:results_aware}.} Every such chain is injective by construction, and the terminal escape closes the digit-only bypass without changing stored verifier formats. Case maps and high-bit \chain{BSL-X}/\chain{BSR-X} variants remain only in the $|\mathcal{O}|=14$ evaluation library that defines the nominal chain space (\S\ref{sec:combinatorial_compounder}).

\section{ARMOR Security Analysis and Threat Model}
\label{sec:security_analysis}
This section formalizes the verifier-side security boundary for IoT, edge, and legacy deployments assumed throughout the evaluation. ARMOR addresses DB-only offline compromise, where the adversary obtains stored verifiers but not the CR. Table~\ref{tab:threat_scope} delineates threat scope, including CR compromise, which reduces security to the underlying verifier defense.

\subsection{Threat Classification and Scope}
Password threats include (1) online guessing against an authentication service, including targeted guesses based on personal information~\cite{wang2016targeted}; (2) offline guessing against disclosed verifiers; and (3) direct acquisition of plaintext passwords. ARMOR's in-scope asset is plaintext-password confidentiality after verifier disclosure, with account integrity and stored-verifier integrity treated as secondary properties. \S\ref{sec:adversary_model} formalizes the DB-only adversaries.

\subsection{Adversary Models}
\label{sec:adversary_model}
We model two DB-only offline cracking adversaries with GPU-class hardware and a separate chosen-password enrollment adversary. The cracking adversaries operate on the exfiltrated verifier DB without the CR mapping \rev{$\lit{user\_id}\mapsto K$}; under Kerckhoffs' Principle~\cite{Petitcolas2025}, the ARMOR construction (algorithm, library $\mathcal{O}$, depth $k$) is public. The two cracking adversaries differ in whether they enumerate transformation chains.

\textbf{Adversary 1: \textsc{naive} (non-chain-enumerating).} This adversary applies wordlists, brute-force masks, and public rule files that encode human password-mutation patterns~\cite{tatli2015cracking,weir2009password}, but does not enumerate ARMOR chains.

\textbf{Adversary 2: \textsc{aware}.} This adversary knows $\mathcal{O}$ and $k$ but not the per-user chain $K$. For each candidate password, it can enumerate up to $|\mathcal{O}|^k$ ordered chains. Each transformed candidate can be reused across target accounts, but each account requires a separate salted verifier check. Section~\ref{sec:operational_guarantees} distinguishes nominal enumeration from conditional chain cost.

\begin{algorithm}[t]
\caption{\small Historical 14-operation \textsc{aware} evaluation with nominal $|\mathcal{O}|^k$ chain expansion. Production appends terminal \chain{ESC} and uses $\mathcal{V}_k$.}
\label{alg:aware_adversary}
\footnotesize
\begin{algorithmic}[1]
\REQUIRE credential records $\langle \lit{user\_id},\, h,\, \lit{salt}\rangle$; wordlist $\mathcal{W}$; public $\mathcal{O}$, depth $k$
\ENSURE matching (account, candidate, chain) triples
\FORALL{candidate $P \in \mathcal{W}$}
  \FORALL{ordered chain $c=(op_1,\dots,op_k) \in \mathcal{O}^k$}
    \STATE $P' \gets (op_k\circ\cdots\circ op_1)(P)$ \COMMENT{chain hypothesis reusable across accounts}
    \FORALL{exfiltrated credential record $\langle \lit{user\_id},\, h,\, \lit{salt}\rangle$}
      \IF{$H(P'\,\|\,\lit{salt}) = h$}
        \STATE record the matching triple $(\lit{user\_id},P,c)$
      \ENDIF
    \ENDFOR
  \ENDFOR
\ENDFOR
\end{algorithmic}
\end{algorithm}

A matching triple need not identify the original password or a password accepted under the account's assigned chain. Production injectivity holds for each fixed chain, not across different chains. Algorithm~\ref{alg:aware_adversary} records matches without resolving this ambiguity.

\begin{revblock}
\textbf{Adversary 3: chosen-password enrollment.} An adversary may open many accounts, choose their passwords, and use a breached DB to test those known passwords against every chain in $\mathcal{V}_k$. This can recover the adversary's own chains up to output equivalence. Let $K_v$ be an independently assigned victim chain, and let $E_A$ contain only observations from the adversary's own enrollments. If assignment is independent and uniform over $\mathcal{V}_k$, these observations do not change the victim-chain distribution. For $c\in\mathcal{V}_k$,
\[
\Pr[K_v=c\mid E_A]=\Pr[K_v=c]=1/|\mathcal{V}_k|.
\]
This equality does not describe the entire post-breach view, which may also include the victim's verifier.

The enrollment tests in Appendix~\ref{app:VI} evaluate marginal uniformity (T1), serial dependence (T2), identifier dependence (T3), victim-chain prediction (T4), and output-equivalence structure (T5). The evaluation uses up to $10^6$ enrollments per scheme per depth and adversaries holding up to $10^5$ accounts. The specified cryptographically secure pseudorandom number generator (CSPRNG) sampler produced no rejection in T1--T3 and no three-standard-error flag in T4. T5 is descriptive, not a leakage test. Round-robin allocation and derivation from account identifiers are each exactly uniform over $\mathcal{V}_k$, so marginal uniformity alone detects neither dependent scheme. Both disclose the victim chain under the tested model, giving top-$1$ prediction accuracy $1.000$ and up to $21{,}822\times$ chance at $k{=}5$. Assignment therefore needs independence and non-derivability as well as a suitable marginal distribution.
\end{revblock}

\subsection{Separation of Compromise Domains}
\label{sec:security_privilege}
ARMOR's operational cost amplification depends on DR-4. The credential DB and CR must remain separate compromise domains, so DB exfiltration alone does not expose the account-to-chain mapping, although the operation library is public. This is an instance of separation of privilege~\cite{saltzer1975protection}. ARMOR realizes this separation at two assurance levels, depending on platform capability.

\textbf{1. Hardware-Enforced Trust Boundaries.}
Modern secure IoT microcontrollers (e.g., Cortex-M33) provide hardware-backed isolation (TrustZone-M~\cite{armTrustZoneCM}, execute-only memory (XOM)~\cite{kwon2019uxom}) but are SRAM-constrained ($\le 520$\,KiB), offering trust boundaries without enough memory for recommended memory-hard KDFs (\S\ref{sec:kdf_infeasibility}). ARMOR can place the compact per-account chain state inside the isolated region. Its transform buffers are larger and separately bounded by Eq.~(\ref{eq:lenbound}).

\textbf{2. Firmware-Isolation Fallback.}
Without TEE support, DR-4 relies on the firmware-flash split discussed in \S\ref{sec:deployment_model}~\cite{stm32l4protections}. This fallback is weaker than silicon-enforced isolation. It raises the required compromise from DB exfiltration to firmware extraction but does not protect the CR under full firmware readout~\cite{obermaier2017firmware}. Where a KDF is deployed and accepts the complete transformed output, CR exposure removes ARMOR's chain-search benefit but leaves the KDF layer in place. In hash-only deployments, the remaining protection comes from the underlying salted hash and any independently protected mechanism.

\subsection{Authentication Equivalence}
\label{sec:auth_equivalence}
Because ARMOR is applied before hashing, $H(\textsc{armor}(P,K)\,\|\,\lit{salt})$, a non-injective transform can make distinct plaintexts $P_1\neq P_2$ with $\textsc{armor}(P_1,K)=\textsc{armor}(P_2,K)$ authenticate to the same account. This affects authentication integrity, not just attacker cost. DR-2 therefore requires production chains drawn from $\mathcal{O}_{\mathrm{inj}}$ (\S\ref{sec:injective_escape}). The non-injective evaluation primitives quantify the risk excluded by this restriction.

Over $1$M sampled RockYou entries ($999{,}997$ distinct plaintexts), the most collision-prone non-injective chain yields $3.99\%$ fewer distinct transformed values than distinct plaintexts, versus $\le 0.08\%$ for near-injective chains. This distinct-value loss measures within-chain output collapse, not the probability that an arbitrary wrong guess authenticates. For a fixed account and chain $K$, false-accept exposure is the attacker's guess-distribution mass on the equivalence class $[P]_K\setminus\{P\}$, not a uniform random-string probability.

\begin{lemma}[Fixed-chain production injectivity]
\label{lem:production_injectivity}
For every fixed accepted production chain $K\in\mathcal{V}_k$, the transform $P\mapsto\textnormal{\textsc{armor}}(P,K)=\chain{ESC}(T_K(P))$ is injective on its admissible input domain. For distinct passwords under that chain and the same salt, equal stored verifiers would require a collision of the underlying hash $H$.
\end{lemma}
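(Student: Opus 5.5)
The proof reduces injectivity of the composite to injectivity of each factor, since a composition of injective maps is injective. By Eq.~(\ref{eq:prod}), $\textsc{armor}(\cdot,K)=\chain{ESC}\circ op_k\circ\cdots\circ op_1$ with every $op_i\in\mathcal{O}_{\mathrm{inj}}$ from Eq.~(\ref{eq:Oinj}). The admissible domain is the set of byte strings of length at most $P_{\max}=128$, and by Eq.~(\ref{eq:lenbound}) every intermediate value fits in the capacity $B$. No operation is truncated or skipped, so each $op_i$ acts as its full mathematical definition on finite byte strings. The plan is to check each of the eight operations and \chain{ESC} separately as maps on all finite byte strings, then compose.

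The individual checks are short.

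\textbf{Length- and position-preserving permutations.} \chain{REV}, \chain{ROT-L} and \chain{ROT-R} are bijections on strings of each fixed length. \chain{REV} is self-inverse, and \chain{ROT-L} and \chain{ROT-R} are mutual inverses.

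\textbf{Per-byte bijections.} \chain{TOG} swaps upper- and lower-case ASCII letters and fixes all other bytes, so it is an involution. \chain{BSL} and \chain{BSR} are one-bit circular rotations of an 8-bit byte, which are bijections of $\{0,\dots,255\}$ and inverse to each other.

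\textbf{Length-doubling operations.} \chain{DUP} and \chain{RFL} are injective because the output has length $2|s|$. The length of $s$ is therefore recoverable, and $s$ is the first $|s|$ bytes of the output.

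\textbf{\chain{ESC}.} Each byte $b$ maps to the pair $(\texttt{0x80}\mid b\gg4,\ \texttt{0x80}\mid b\,\&\,\texttt{0xf})$. The low nibbles of the two output bytes give the high and low nibbles of $b$, so $b$ is recovered. The output length is exactly twice the input length, so the pairing of output bytes is unambiguous. Hence \chain{ESC} is injective on all strings.

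Once each factor is injective, the composite is injective. Membership in $\mathcal{V}_k$ only restricts which chains are assigned; it is not needed for injectivity beyond guaranteeing that the length bound holds, so no run-time truncation occurs.

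For the verifier statement, take $P_1\neq P_2$ admissible, the same chain $K$, and the same salt. Injectivity gives $X_1=\textsc{armor}(P_1,K)\neq X_2=\textsc{armor}(P_2,K)$. Equal stored verifiers mean $H(X_1\|\mathit{salt})=H(X_2\|\mathit{salt})$. The main point needing care is whether $X_1\|\mathit{salt}\neq X_2\|\mathit{salt}$, because concatenation is not injective in general. Here the salt is the same fixed suffix, so equal concatenations force equal prefixes by right-cancellation in the free monoid of byte strings, which would give $X_1=X_2$. The two hash inputs are therefore distinct strings with equal $H$-values, which is a collision of $H$.

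I expect the only subtle step to be the domain and implementation caveat. The argument relies on every operation being applied in full, with no truncation by $B$ or by a downstream $L_{\mathrm{KDF}}$. This is exactly what the enrollment check of Eq.~(\ref{eq:lenbound}) and the 128-byte input rejection guarantee, so I would state it explicitly as the reason the lemma is restricted to accepted chains on the admissible domain.
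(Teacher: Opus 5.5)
Your proposal is correct and follows the paper's proof: each operation in $\mathcal{O}_{\mathrm{inj}}$ is injective, so their composition $T_K$ is injective, and terminal \chain{ESC} preserves injectivity because its two output nibbles recover each input byte. You add detail the paper omits, namely a separate check of each of the eight operations and an explicit derivation of the hash-collision sentence by cancelling the common salt suffix.
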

\begin{proof}
Each operation in $\mathcal{O}_{\mathrm{inj}}$ is injective, so their composition $T_K$ is injective. Terminal \chain{ESC} preserves injectivity because its pair of output nibbles uniquely determines each input byte. Therefore $\chain{ESC}\circ T_K$ is injective.
\end{proof}
The result is for a fixed chain and does not establish uniqueness across different chains.

\subsection{ARMOR versus Per-User Peppering}
\label{sec:peppering}
Per-user peppering is cryptographically stronger when protected per-user secrets can be provisioned and managed correctly. ARMOR offers compact, schema-preserving preprocessing and a finite per-account chain search problem rather than high-entropy secret protection. Exposure of a pepper or CR removes the corresponding secret-dependent benefit. Once the assigned chain is known, an attacker can apply its transformation before hashing. Failure of an unmodified tool to generate that output does not protect against this attacker, who differs from the DB-only \textsc{aware} adversary. The mechanisms can be layered as \lit{Hash}(\lit{pepper}$\|$\lit{ARMOR(password)}) or, when resources permit and the KDF accepts the full combined input, \lit{KDF}(\lit{pepper}$\|$\lit{ARMOR(password)}). Appendix~\ref{app:IV} compares storage and hash-based message authentication code (HMAC) overhead.

We claim no general latency advantage over peppering. For the measured secure-service components (\S\ref{sec:results_secure_service}), the per-user-pepper sum is $1.42\times$ ARMOR's when each includes its per-account storage read. A global pepper with its key held live has the lower sum. The designs also manage different secret state. Disclosure of a global pepper affects every account using it, whereas disclosure of one per-user pepper affects that account. ARMOR stores less protected per-account state than a $128$-bit per-user pepper and gives a bounded, conditional chain cost rather than the cryptographic protection of a high-entropy secret. Rotating an assigned chain requires re-enrollment because the stored verifier depends on the transformed password, as it does when changing a per-user pepper.

Among post-breach defenses that rely on secret material stored separately from the credential DB, ARMOR combines compact per-account chain state with low transform cost and no required online auxiliary service (Table~\ref{tab:design_space}). A protected high-entropy pepper or password-hardening service provides stronger cryptographic protection under DB-only compromise. ARMOR instead targets constrained deployments through bytewise preprocessing without protocol changes. Its per-account state still requires protected storage, independent assignment, migration, rotation, and recovery procedures. Per-account isolation is also available with per-user peppers and is not unique to ARMOR.

\begin{table*}[!t]
\renewcommand{\arraystretch}{1.25}
\caption{Design space of secret-based post-breach password defenses. A protected pepper or hardening service provides stronger cryptographic protection under DB-only compromise. The mechanisms also differ on the remaining axes.}
\label{tab:design_space}
\centering
\footnotesize
\setlength{\tabcolsep}{4pt}
\begin{tabularx}{\textwidth}{@{}l >{\raggedright\arraybackslash}X >{\raggedright\arraybackslash}X >{\raggedright\arraybackslash}X c l l@{}}
\toprule
\textbf{Mechanism} & \textbf{Protected secret} & \textbf{DB-only breach} & \textbf{If secret also leaks} & \textbf{Aux.\ service} & \textbf{Granularity} & \textbf{Constrained-viable} \\
\midrule
Peppering (co-located)          & Global $128$-bit key            & Resists (weak boundary)   & Baseline restored       & No               & System-wide  & Yes \\
Isolated pepper (TEE/HSM)        & Global key in enclave           & Resists (cryptographic)   & Baseline restored       & No               & System-wide  & Needs TEE/HSM \\
Hardening service (PHE)          & Server key + protocol           & Resists (provable)        & Baseline restored       & Yes (round trip) & System-wide  & No \\
ARMOR                            & Per-account chain ($422$ accepted choices at $k{=}3$) & Finite work-factor & \rev{Baseline restored} & No           & Per-account  & Yes \\
\bottomrule
\addlinespace[2pt]
\multicolumn{7}{@{}>{\raggedright\arraybackslash}p{\textwidth}@{}}{\scriptsize \textit{Note:} ``DB-only breach'' means that only the credential DB is exfiltrated. If the protected pepper, service key, or account chain also leaks, its keyed benefit is lost for the affected accounts. An attacker who knows the chain can apply it before hashing, even if stock tools cannot (\S\ref{sec:bounded_structural}). Isolated pepper and PHE offer cryptographically stronger DB-only protection than ARMOR. The mechanisms differ in secret granularity, key management, auxiliary-service dependence, and constrained-device viability.}\\
\end{tabularx}
\end{table*}

\subsection{Operational Cost Model and Resistance Mechanisms}
\label{sec:operational_guarantees}
\label{sec:combinatorial_compounder}
Given the \textsc{aware} model and DR-4, ARMOR expands each plaintext guess into a chain-indexed candidate set per account (Algorithm~\ref{alg:aware_adversary}). The historical $|\mathcal{O}|{=}14$ evaluation library enumerates the complete nominal space ($2{,}744$ chains at $k{=}3$ and $38{,}416$ at $k{=}4$), but chain equivalence means these counts are not work factors. The production construction is measured separately by its conditional unit-cost guessing number $G_k$ over $\mathcal{O}_{\mathrm{inj}}$ (\S\ref{sec:results_aware}). Production draws $K$ from the eight non-terminal operations of Eq.~(\ref{eq:Oinj}), applies the adjacent-pair and length checks, and then applies the fixed terminal escape of Eq.~(\ref{eq:prod}). The $8^3{=}512$ nominal orderings at $k{=}3$ become $|\mathcal{V}_3|{=}422$ accepted production chains (Table~\ref{tab:guessing_number}). Larger depths increase the accepted set ($|\mathcal{V}_4|{=}3{,}054$, $|\mathcal{V}_5|{=}21{,}822$), while the measured secure-service stress case at $k{=}5$ remains small relative to its protected-storage path (\S\ref{sec:results_secure_service}). In attacks against multiple accounts, a transformed candidate can be reused, but each account requires a separate salted verifier check. The conditional guessing number therefore does not establish an end-to-end cracking slowdown.

\begin{definition}[Conditional guessing number]
\label{def:conditional_guessing}
Fix the correct base plaintext $p$ and depth $k$. Partition the accepted chains $K\in\mathcal{V}_k$ by $T_K(p)$ into $E=E^{\mathrm{prod}}_k(p)$ output classes with sizes $m_1\ge\cdots\ge m_E>0$ and $\sum_i m_i=|\mathcal{V}_k|$. Under uniform hidden-chain assignment, an adversary given $p$ tests these classes in nonincreasing size order. The expected number of unit-cost class tests is
\[
G_k(p)=\frac{1}{|\mathcal{V}_k|}\sum_{i=1}^{E}i\,m_i .
\]
The fixed, injective terminal \chain{ESC} preserves these classes.
\end{definition}

\begin{proposition}[Bounds]
\label{prop:gk}
For every base plaintext $p$ and depth $k$,
\[
1\le G_k(p)\le E^{\mathrm{prod}}_k(p)\le|\mathcal{V}_k|
\le|\mathcal{O}_{\mathrm{inj}}|^k .
\]
Hence the production distinct-output count $E^{\mathrm{prod}}_k(p)$ and the nominal production chain count $|\mathcal{O}_{\mathrm{inj}}|^{k}$ are both upper bounds on $G_k(p)$. Reporting either in place of $G_k$ overstates the conditional chain subproblem.
\end{proposition}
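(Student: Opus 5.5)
The plan is to prove the chain of inequalities one link at a time, working directly from Definition~\ref{def:conditional_guessing}. We assume throughout that $\mathcal{V}_k$ is nonempty, so that $E=E^{\mathrm{prod}}_k(p)\ge1$; every evaluated depth has accepted chains (for instance $|\mathcal{V}_3|=422$).

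\textbf{The two rightmost links.} These are counting facts about partitions and subsets. The $E$ output classes partition $\mathcal{V}_k$ into nonempty blocks, and a partition of a finite set into nonempty blocks has at most as many blocks as elements. This gives $E\le|\mathcal{V}_k|$. Also, $\mathcal{V}_k\subseteq\mathcal{O}_{\mathrm{inj}}^{\,k}$ by its definition in \S\ref{sec:chain_constraints}, which gives $|\mathcal{V}_k|\le|\mathcal{O}_{\mathrm{inj}}|^k$.

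\textbf{The left two links.} The main step is to read $G_k(p)$ as an expectation. Let $w_i=m_i/|\mathcal{V}_k|$. Then $w_i>0$ and $\sum_i w_i=1$, so
\[
G_k(p)=\sum_{i=1}^{E} i\,w_i
\]
is a convex combination of the indices $1,\dots,E$.
\begin{itemize}
\item Every index satisfies $i\ge1$, so the convex combination is at least $1$.
\item Every index satisfies $i\le E$, so the convex combination is at most $E$.
\end{itemize}
The nonincreasing ordering $m_1\ge\cdots\ge m_E$ is not needed for either bound. It would only sharpen the upper bound to $(E+1)/2$ via Chebyshev's sum inequality, which we may mention in passing.

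\textbf{Concluding statements.} The ``hence'' clause follows by transitivity along the chain. The final sentence about overstating is an interpretation of the inequalities and needs only that remark. As the definition notes, the classes may be computed either before or after the terminal \chain{ESC}. Because \chain{ESC} is injective (Lemma~\ref{lem:production_injectivity}), the two partitions coincide, so $E^{\mathrm{prod}}_k(p)$ is well defined either way.

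\textbf{Expected obstacle.} There is no real obstacle; the whole argument is elementary. The only care needed is the nonemptiness of $\mathcal{V}_k$, which ensures $E\ge1$ and that the normalization by $|\mathcal{V}_k|$ is valid. This is why we state it as an explicit assumption at the outset.
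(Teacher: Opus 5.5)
Your proposal is correct and follows essentially the same route as the paper's proof: you read $G_k(p)$ as a weighted average of the class indices $1,\dots,E^{\mathrm{prod}}_k(p)$ with weights $m_i/|\mathcal{V}_k|$, use nonemptiness of each class to get $E^{\mathrm{prod}}_k(p)\le\sum_i m_i=|\mathcal{V}_k|$, and use that the validator only removes chains from $\mathcal{O}_{\mathrm{inj}}^{\,k}$. Two additions are also valid: your explicit assumption that $\mathcal{V}_k$ is nonempty, which the paper leaves implicit, and your side remark that the nonincreasing ordering sharpens the upper bound to $(E+1)/2$ via Chebyshev's sum inequality.
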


\begin{proof}
Every class index lies in $\{1,\dots,E^{\mathrm{prod}}_k(p)\}$, so $G_k(p)=\sum_i (m_i/|\mathcal{V}_k|)\,i\in[1,E^{\mathrm{prod}}_k(p)]$. Each of the $E^{\mathrm{prod}}_k(p)$ classes is nonempty, so $E^{\mathrm{prod}}_k(p)\le\sum_i m_i=|\mathcal{V}_k|$; and the enrollment validator only removes chains from $\mathcal{O}_{\mathrm{inj}}^{k}$, so $|\mathcal{V}_k|\le|\mathcal{O}_{\mathrm{inj}}|^{k}$.
\end{proof}

For a nonuniform password distribution, success is the cumulative probability mass of the guesses tried. \rev{The conditional $G_k(p)$ is not a password-space multiplier. A joint attacker must also order unknown plaintexts and evaluate the salted verifiers. The bounded production experiment in \S\ref{sec:results_aware} measures salted checks and CUDA search time for one fixed joint schedule; it does not establish the minimum cost of an optimized end-to-end attacker or justify multiplying separate component measurements.} Because chains are assigned independently, discovering one chain does not identify another.

\textbf{Chain diversity versus character entropy.} This conditional chain cost is a property of the authentication system, not an increase in per-character entropy. It arises from uncertainty about the assigned chain. Under each fixed production chain, injectivity preserves distinct pre-hash values for distinct passwords (\S\ref{sec:injective_escape}).

\textbf{A finite, composable chain subproblem.} \rev{$G_k$ is a bounded conditional, unit-cost
quantity, not an end-to-end cracking multiplier or a cryptographic-hardness assumption. Its median is
$52\times$ at the primary evaluated operating point $k{=}3$, rising to $317\times$ at $k{=}5$}
(Table~\ref{tab:guessing_number}). We report it instead of the validator-accepted chain count because the two
differ by $8.1\times$ at the primary evaluated operating point ($k{=}3$). Where a memory-hard KDF can be deployed, ARMOR transforms the submitted password before the KDF evaluates it,
$\lit{KDF}(\lit{ARMOR}(P))$. The complete offline cost
must be established over the joint plaintext, chain, salt, and verifier-check schedule.

The following subsection explains rule-pipeline disruption against the \textsc{naive} adversary. Neither candidate-coverage limitations nor conditional $G_k$ alone establish complete offline attack cost.

\subsection{Structural Disruption of Rule-Based Pipelines}
\label{sec:structural_invalidation}
Standard rule sets model human mutation patterns such as case toggles, substitutions, and keyboard walks. Layer-$2$ bit-diffusion primitives (\chain{BSL}, \chain{BSR}, Table~\ref{tab:armor_primitives} and \S\ref{sec:hierarchy}) can produce outputs outside the evaluated rule grammars. A \textsc{naive} attacker may therefore include the correct plaintext in its candidate set yet fail to generate the required pre-hash value.

The digit-only bypass and measured hardening results appear in \S\ref{sec:bounded_structural}, and \S\ref{sec:injective_escape} specifies the production escape encoding. These results concern \textsc{naive} candidate coverage, not protection against the \textsc{aware} adversary, which can generate transformed bytes directly.

\section{Experimental Methodology}
\label{sec:methodology}

Building on the DB-only adversary model in \S\ref{sec:adversary_model}, this section specifies the platforms, datasets, ARMOR tiers, and attack procedures used to evaluate ARMOR. We first describe the GPU cracking and IoT/edge resource-constrained benchmarking environments, then define the RockYou-derived targets, rule files, and transformation library. The section then details the \textsc{naive} dictionary/brute-force/rule-based attacks, the \textsc{aware} chain-expanded evaluation, and digit-only bypass experiments for high-bit and production \chain{ESC} hardening. The historical offline-resistance evaluation spans two GPUs (NVIDIA A100 and Blackwell), the \textsc{naive} and \textsc{aware} attack classes and the digit-class hardening study, RockYou-derived target populations, four hash algorithms, public rule sets, and fixed time budgets of $30$/$60$/$300$/$600$\,s (Table~\ref{tab:eval_scale_S}).

\subsection{High-Performance Cracking Platform}\label{sec:gpu}
Historical adversary throughput is measured on an NVIDIA A100 $80$\,GB PCIe GPU ($6{,}912$ CUDA cores), reflecting the offline-cracking threat model of \S\ref{sec:adversary_model}. Hardened-configuration replication uses an NVIDIA RTX~PRO~6000 Blackwell GPU ($24{,}064$ CUDA cores, $96$\,GB GDDR7). Section~\ref{sec:bounded_structural} reports the resulting digit-bypass and hardening measurements.

\subsection{IoT and Edge Evaluation Platforms}
\label{sec:iot_edge_platforms}
We benchmark ARMOR's defender-side overhead on a Raspberry Pi~3 Linux-class edge platform and three bare-metal IoT boards: the NUCLEO-L432KC (Cortex-M4), Raspberry Pi Pico~2~W (Cortex-M33), and ESP32 DevKit (Xtensa LX6). Table~\ref{tab:iot_edge_platforms} provides their hardware specifications and software environments. The Pi~3 serves as the reference where KDFs fit but are operationally costly. The three bare-metal boards represent classes where KDFs do not fit at security-meaningful settings (\S\ref{sec:kdf_infeasibility}).

\subsection{Software Environment and Attack Modes}
GPU cracking ran hashcat v$6.2.6$ on Ubuntu $24.04$ with optimized kernels (\lit{-O}). The main \textsc{naive} evaluation uses the following three attack modes against the same target files, with fixed wall-clock budgets of $30$/$60$/$300$/$600$\,seconds (\S\ref{sec:results_attack}). Table~\ref{tab:eval_scale_S} gives the per-block accounting.

\begin{enumerate}
\item \textbf{Dictionary.} Direct RockYou wordlist attack ($14.3$M passwords), measuring base candidate coverage.
\item \textbf{Brute-force.} Time-limited mask enumeration over $\{a\text{-}z, A\text{-}Z, 0\text{-}9\}$ plus hashcat's built-in special charset (\lit{?s}) for lengths $6$--$25$ using \lit{--increment}.
\item \textbf{Rule-based.} RockYou augmented with six primary hashcat-compatible rule files plus the extended \textit{omgwtfbbq} evaluation ($6.72$\,M rules), spanning five orders of magnitude in rule count (Table~\ref{tab:rulefiles}). This is the main \textsc{naive} attack mode because ARMOR targets rule-pipeline disruption. The digit-class bounded evaluation extends this set to $16$ rule files (\S\ref{sec:digit_class_methodology}, Table~\ref{tab:digit_perfile_S})~\cite{oclhashcat2013rule}.
\end{enumerate}

Secondary checks evaluate non-rule guessers on plain-chain and escape-chain (\lit{ESC-HC1}) digit-class targets. Markov-ordered masks and PRINCE use $600$\,s budgets. Hashcat masks (\lit{-a 3}) span lengths $1$--$8$, permitting digits (\lit{0--9}), lowercase letters (\lit{a--z}), or their union at each position, with default Markov ordering~\cite{narayanan2005fast}. PRINCE (PRobability INfinite Chained Elements) uses \lit{princeprocessor} over the top $100$K RockYou entries~\cite{steube2014prince}. PCFG (probabilistic context-free grammar) uses RockYou-trained \lit{pcfg\_cracker} with an untransformed-target control~\cite{weir2009password}. Its retained record describes an approximately $50$M-candidate file for hashcat checking. Elapsed time and stopping budget were not retained. These variants yield $11$ attack/target-group outcomes. All emit printable candidates and cannot reach non-printable escape output (\S\ref{sec:bounded_structural}).

\subsection{Dataset Construction}
Target sets were drawn from RockYou passwords filtered to lengths $6$--$25$ as
reproducible seed-$42$ random samples without replacement. We evaluate three
sizes, $|\mathcal{D}| \in \{10{,}000,\ 100{,}000,\ 1{,}000{,}000\}$, and reuse
each sampled set across hash algorithms and ARMOR configurations so the
target selection is held fixed across the hash-algorithm and chain comparisons. Every target plaintext appears in the full source corpus, giving the direct full-corpus wordlist attack complete base-word coverage. This does not mean every time-limited mask or rule attack generates every target. All ARMOR-resistance results in \S\ref{sec:results} use these target files.

\subsection{ARMOR Tier Structure}

The broad \textsc{naive} rule-based evaluation spans $25$ base configurations across five ARMOR tiers (T0--T4) and an untransformed baseline. These configurations draw on $12$ primitives, consisting of structural string transforms and rotate-only bit-diffusion operations (\chain{BSL}/\chain{BSR}). Adding the high-bit variants \chain{BSL-X}/\chain{BSR-X} (\S\ref{sec:bounded_structural}) gives the $|\mathcal{O}|{=}14$ library that the \textsc{aware} adversary enumerates (\S\ref{sec:aware_methodology}) and brings the full set to $31$ configurations (Table~\ref{tab:tier_configs}). The six additions are the two T0 variants and the hardened T3-HC1/2/3 and T4-HC1 chains, whose \chain{BSL-X}/\chain{BSR-X} steps mitigate the digit-only bypass. Production \chain{ESC} (\S\ref{sec:injective_escape}) closes this bypass. The representative ESC-HC1 chain is evaluated in the digit-class study (\S\ref{sec:digit_class_methodology}) and benchmarked for latency (\S\ref{sec:results_perf}). Each relevant configuration is evaluated against \lit{MD5}, \lit{SHA-1}, \lit{SHA-256}, and \lit{SHA-512}, with \textit{Baseline} (\lit{Hash(password)}) as the reference.

\subsection{Performance Benchmarking}
We measure transform latency on a Linux-class edge reference and bare-metal IoT boards (\S\ref{sec:iot_edge_platforms}). We also probe \lit{Argon2id} memory feasibility. Section~\ref{sec:secure_service_methodology} separately measures secure-service components.

\textbf{Linux reference (Pi~$3$).} We measure ARMOR pre-processing latency for lengths $\{6,11,14,20\}$, subtracting a \lit{memcpy}-only baseline to isolate transform cost. The password hash is common to all configurations. The same procedure times legacy hashes and \lit{Argon2id} profiles at embedded-tuned, OWASP-recommended $(19\,\mathrm{MiB},2,1)$~\cite{owaspPasswordStorageCheatSheet}, and high-memory settings. Table~\ref{tab:armor_overhead_S} reports the protocol and baseline-subtracted ARMOR costs.

\textbf{Bare-metal IoT benchmarks.} On the three bare-metal IoT boards in \S\ref{sec:iot_edge_platforms}, ARMOR latency is measured with platform cycle counters across the same four length bands. Warmup, iteration, and interrupt-masking details are reported in the note to Table~\ref{tab:embedded_latency}. The worst-case $4$-op chain, T4-HC1 at the longest length, defines peak embedded latency.

\textbf{Empirical \lit{Argon2id} infeasibility.} Each bare-metal IoT board ran a heap-allocation ceiling probe against the evaluated OWASP-recommended $(19\,\mathrm{MiB},2,1)$ \lit{Argon2id} profile (\S\ref{sec:kdf_infeasibility}). The largest successful allocations are reported in \S\ref{sec:results_perf}.

\begin{table*}[!t]
\renewcommand{\arraystretch}{1.3}
\setlength{\tabcolsep}{4pt}
\caption{ARMOR configuration library: $31$ configurations over $|\mathcal{O}|{=}14$ primitives.}
\label{tab:tier_configs}
\centering
\footnotesize
\begin{tabularx}{\textwidth}{@{}l l c >{\raggedright\arraybackslash}X@{}}
\toprule
\textbf{Tier} & \textbf{Description} & \textbf{$n$} & \textbf{Configurations / Chains evaluated} \\
\midrule
Baseline & Legacy hash & $1$  & \chain{Hash(password)} (reference with no ARMOR) \\
T$0$     & Single primitive      & $14$ & \chain{L-CASE, U-CASE, CAP, CAP-INV, TOG, REV, RFL, DUP, ROT-L/R, BSL/R, BSL/R-X} \\
T$1$     & $2$-op structural     & $3$  & \textbf{T1-C1}: \chain{TOG}$\to$\chain{REV}; \textbf{T1-C2}: \chain{CAP}$\to$\chain{DUP}; \textbf{T1-C3}: \chain{ROT-L}$\to$\chain{RFL} \\
T$2$     & $2$-op bit-diffusion  & $3$  & \textbf{T2-BC1}: \chain{BSL}$\to$\chain{ROT-L}; \textbf{T2-BC2}: \chain{BSR}$\to$\chain{REV}; \textbf{T2-BC3}: \chain{BSL}$\to$\chain{BSR} \\
T$3$     & $3$-op mixed          & $6$  & \textbf{T3-BC1}: \chain{ROT-R}$\to$\chain{REV}$\to$\chain{BSL}; \textbf{T3-BC2}: \chain{DUP}$\to$\chain{BSR}$\to$\chain{ROT-L}; \textbf{T3-BC3}: \chain{TOG}$\to$\chain{RFL}$\to$\chain{BSL}\newline \textbf{T3-HC1}: \chain{ROT-R}$\to$\chain{REV}$\to$\chain{BSL-X}; \textbf{T3-HC2}: \chain{DUP}$\to$\chain{BSR-X}$\to$\chain{ROT-L}; \textbf{T3-HC3}: \chain{TOG}$\to$\chain{RFL}$\to$\chain{BSL-X} \\
T$4$     & $4$-op mixed          & $4$  & \textbf{T4-BC1}: \chain{ROT-L}$\to$\chain{BSR}$\to$\chain{RFL}$\to$\chain{TOG}; \textbf{T4-BC2}: \chain{DUP}$\to$\chain{BSL}$\to$\chain{REV}$\to$\chain{ROT-L}; \textbf{T4-BC3}: \chain{BSL}$\to$\chain{REV}$\to$\chain{BSR}$\to$\chain{DUP}$^{\dagger}$; \textbf{T4-HC1}: \chain{ROT-L}$\to$\chain{BSR-X}$\to$\chain{RFL}$\to$\chain{TOG} \\
\hline\hline
Prod$^{\ddagger}$ & Production escape & $-$ & \textbf{ESC-HC1}: \chain{TOG}$\to$\chain{DUP}$\to$\chain{ESC} \\
\bottomrule
\addlinespace[2pt]
\multicolumn{4}{@{}>{\raggedright\arraybackslash}p{\textwidth}@{}}{\scriptsize \textbf{Note:} \textbf{C}=structural; \textbf{BC}/\textbf{HC}=base/hardened bit-diffusion. $^{\dagger}$validator-accepted chain with non-adjacent algebraic cancellation. $^{\ddagger}$production \chain{ESC} chain over $\mathcal{O}_{\mathrm{inj}}$, not among the $31$.} \\
\hline
\end{tabularx}
\end{table*}

\subsection{ARMOR as a TrustZone-M Secure Service}
\label{sec:secure_service_methodology}

The bare-metal benchmarks (\S\ref{sec:iot_edge_platforms}) isolate ARMOR's raw
transform cost, but they do not exercise the isolation boundary that the security
argument depends on. The CR must be unreadable from the
side that stores password hashes (DR-4, \S\ref{sec:security_privilege}). To measure
secure-side chain lookup, transform, and peppering components, we implement a prototype service
on the Cortex-M33 (Pico~2W) under TF-M at
Isolation Level~$2$, with the profile-medium configuration and dummy provisioning.
The same silicon and clock as Table~\ref{tab:iot_edge_platforms} are used. Only the
software stack changes, from bare-metal firmware to a TF-M secure image.

\textbf{Partition.} ARMOR runs as a single Platform Security Architecture (PSA) Root-of-Trust (RoT) partition
exposing the secure function \lit{ARMOR\_HARDEN}. Isolation Level~$2$ protects secure execution from non-secure software and additionally protects the PSA-RoT domain from the Application-RoT domain.\footnote{TF-M isolation levels: \url{https://trustedfirmware-m.readthedocs.io/en/tf-mv2.2.0/design_docs/ff_isolation.html}.} ARMOR is placed in the PSA-RoT rather
than the Application-RoT domain for a measurement reason. At Isolation Level~$2$ this platform
grants privileged execution only to PSA-RoT partitions. The in-enclave cycle counter used
below lives in the privileged-only private peripheral bus. The service compiles the
\emph{same} canonical \lit{armor\_core} source used in every other measurement in this
paper, unmodified, into the Secure Processing Environment (SPE). The non-secure
application reaches it only through the Secure Partition Manager by a single
\lit{psa\_call}. The per-account chain remains secure-side. The timing service returns the transformed output to its non-secure caller, so this prototype does not establish that a deployed verifier keeps that output inside its protected boundary. The per-account chain is fetched inside the SPE from Internal
Trusted Storage (ITS) under dummy provisioning, keyed by \lit{account\_id}, which is the concrete realization of
the CR isolation the analysis assumes for chain storage. The chain material is a secure-side asset,
addressable only through the service interface.

\textbf{Peppering comparator on the same platform.} To compare ARMOR against per-user
peppering (\S\ref{sec:peppering}) through the same secure-service implementation, the same
partition exposes a second operation that computes HMAC with \lit{SHA-256} over the
password through the platform PSA Crypto API using a $128$-bit key. The global-pepper key is imported once during
one-time setup and held live. The per-user pepper key is read from secure storage on
each call. Its compute timing includes key import, HMAC computation, and key destruction.
ARMOR validates the retrieved chain on each call, but its compute timing includes only the transform.
Chain validation is excluded from the component results in Table~\ref{tab:secure_service} (\S\ref{sec:results_secure_service}).
Both implementations use the same hardware and non-secure-to-secure interface.
The table therefore compares selected implementation components, not complete service execution times.
Routing peppering through PSA
Crypto, rather than the portable software \lit{SHA-256} of \S\ref{sec:results_perf},
includes PSA key lookup and context setup. We verified that this build does not reach the on-die \lit{SHA-256} accelerator. The RP2350 hardware driver is linked only for the attestation hardware abstraction layer (HAL), and no
\lit{SHA-256} alternative implementation is registered with mbedTLS, so the comparator
executes in software.

\textbf{Measurement.} Each operation returns an in-enclave cycle count taken from the
Cortex-M33 Data Watchpoint and Trace (DWT) cycle counter around its compute component, with the timing scope specified above.
Boundary overhead is measured separately by a no-op service
call, timed inside the enclave as the interval between consecutive service entries under
back-to-back invocation. The non-secure DWT is not used because its counter is not
reliable across the Security Attribution Unit and Implementation Defined Attribution Unit boundary. That interval estimates the return-and-reentry component under back-to-back no-op calls. Following the bare-metal protocol (note to
Table~\ref{tab:embedded_latency}), we report the minimum of $N{=}256$ calls after a
warmup, \rev{using $k{=}1$ and validator-accepted, maximum-expansion
$k{=}5$ chains as measured reference configurations}. All measurements use a $15$-character password. Each reported sum combines one no-op boundary estimate, the applicable secure-storage read, and keyed compute, measured separately. These sums exclude ARMOR chain validation, credential DB lookup, final password hashing, and verifier comparison. They are not complete authentication times or latency distributions. The service is exercised from a purpose-built non-secure timing application.
No measurement path depends on secure-side logging.

\subsection{ARMOR-Aware Adversary Methodology}
\label{sec:aware_methodology}

\rev{We complement the \textsc{naive} evaluation with an \textsc{aware} enumerator (\S\ref{sec:adversary_model}). Transformed-target coverage is the fraction of targets matched by a base word under any hypothesized chain. A match need not recover the original password or authenticate under the assigned chain. The enumerator applies all $14^3 = 2{,}744$ depth-$3$ orderings to each candidate and compares transformed bytes directly. There is no time budget or additional candidate-length limit beyond the eligible-corpus filter. Enumeration is exhaustive within the sampled population and each dictionary prefix.}

\rev{The target population is fixed at $1{,}000{,}000$ accounts sampled once from the corpus with seed $42$. It and the three hardened T$3$ chains of Table~\ref{tab:tier_configs} remain fixed across every point. Corpus bytes are used as they appear in the canonical file, with no re-encoding. The independent variable is the attacker's base dictionary, taken as a prefix of the eligible corpus at $10^5$, $10^6$, $5{\times}10^6$ and the full corpus. We distinguish requested prefix entries, eligible entries after the length filter, and unique candidates after deduplication. The full corpus supplies $14{,}053{,}727$ eligible entries and $14{,}053{,}283$ unique candidates. Coverage is weighted by sampled account, preserving target multiplicity. Distinct-transformed-digest coverage is given only as a bridge to the earlier unsalted experiment. Hypotheses evaluated are reported separately from dictionary size.} Deeper tiers add one $|\mathcal{O}|$ factor per operation.

\textbf{Conditional guessing number.} Enumerating the chain space measures how many candidates exist,
not how many tests are needed within the hidden-chain subproblem. \rev{For the production library
$\mathcal{O}_{\mathrm{inj}}$ we enumerate every length-$k$ sequence over the eight non-terminal
production operations at $k{=}1$--$5$. For each fixed plaintext $p$, we group accepted chains by
identical transformed output, assign each output class its hidden-chain multiplicity, order the
classes by that mass, and compute the expected rank of the class containing the assigned chain. This
computes $G_k(p)$ of Definition~V.1. We then account-weight its distribution
over the corpus and report the median and $5$th percentile. It does not order unknown base plaintexts
or quantify the computational cost of salted verifier checks.} The corpus is
the full RockYou list filtered to lengths $6$--$25$ ($14{,}053{,}727$ eligible, $14{,}053{,}283$
unique). The measurement is deterministic (no sampling, no seed) and the corpus, operation set,
and per-depth chain sets are pinned by hash.

\textbf{Bounded production attack and credential recovery.}
We additionally evaluate the first $N{=}1{,}000$ retained targets from a $10{,}000$-target set whose passwords occur in the first $10^6$ eligible RockYou entries ($999{,}997$ unique passwords). Targets follow a deterministic SHA-256 counter/rejection sample of eligible entries without replacement, using seed \lit{ARMOR-W2-production-targets-v1}. The bounded prior is nearly flat: $999{,}994$ candidates occur once; \lit{daniel}, \lit{te amo}, and \lit{i love u} each occur twice. The experiment uses assigned chains from the $422$ accepted $k{=}3$ chains with fixed terminal \chain{ESC}; the ordering model assumes independent uniform chain assignment. Both arms use the same passwords and $16$-byte salts; their verifiers are $\mathrm{MD5}(\mathrm{ESC}(T_K(P))\Vert s)$ and $\mathrm{MD5}(P\Vert s)$. A CPU enumerator merges identical transformed values and orders them by descending aggregate candidate frequency times chain multiplicity, with lexicographic byte ties. Plain candidates are ordered by descending frequency, then first occurrence. This experiment uses A6000 and A100 GPUs, separately from the historical A100/Blackwell evaluation. CUDA replays this fixed unit-cost order using reference and optimized MD5 paths, with length-appropriate short and general kernels. Batch-start snapshots of unrecovered accounts count all executed salted checks, including within-batch overshoot. A minimum of $200$ batches, one warm-up and three measured repeats are used on each GPU. Per-account ranks and work reconcile exactly with the frozen CPU-verified reference. A6000 and A100 use the same hash-bound inputs and deterministic exports.

Kernel events measure MD5 search work; search-wall timing measures the recovery loop. Both exclude CPU schedule construction and sorting. Transfer counters overlap some search operations and are not blindly added to search wall. A separate optimized Hashcat mode-$10$ plain-only anchor checks recovered bytes; different work and process timing prevent a matched-throughput comparison. Full ambiguity sets are then enumerated over the bounded pool and ranked by candidate frequency times consistent-chain multiplicity. A separate grader simulates password submissions under each account's assigned chain. These retained October~10 same-target Threadripper 9960X CPU results are reused, not rerun as part of either new GPU suite (Appendix~\ref{app:VIII}).

\subsection{Digit-Class Bypass Characterization}
\label{sec:digit_class_methodology}

The bounded structural-disruption claim in \S\ref{sec:structural_invalidation}
identifies a plaintext class for which Layer-$2$ bit rotation can remain
printable. We characterize this class with three procedures.

\textbf{1. Structural analysis.} We compute byte-level mapping under \chain{BSL} for all \lit{$0x00$--$0xFF$} inputs. ASCII digits \lit{$0x30$--$0x39$} map to the printable bytes \lit{$0x60$}, \lit{$0x62$}, \ldots, \lit{$0x72$}, so digit-only plaintexts can remain reachable by printable rule pipelines under T$2$ bit-rotation chains. The hardened \chain{BSL-X}/\chain{BSR-X} primitives (rotate then \chain{OR}~\lit{0x80}, \S\ref{sec:bounded_structural}) force every byte, including digits, outside printable ASCII. Injective \chain{ESC} closes the bypass fully (\S\ref{sec:injective_escape}). Boot-time self-tests verify, for all $256$ inputs on each board, that \chain{BSL-X}/\chain{BSR-X} leave the printable-ASCII range and \chain{ESC} emits non-printable byte pairs.

\textbf{2. Empirical prevalence.} We measure the fraction of digit-only passwords in the length-filtered RockYou corpus across six sampling windows: top $100$, $500$, $2{,}000$, $10{,}000$, $100{,}000$, and $1{,}000{,}000$ entries. The digit-only fraction ranges from $4\%$ to $19\%$ across these windows (Table~\ref{tab:digit_prevalence}). These values describe the evaluated digit-only subset, not the wider printable-output class (\S\ref{sec:bounded_structural}).

\textbf{3. Per-rule-file digit-class evaluation.} We evaluate $100{,}000$ digit-only T$3$ targets (seed-$42$ RockYou sample) against $16$ public rule files, extending the seven of Table~\ref{tab:rulefiles} to the full digit-class spectrum. Each attack runs to exhaustion over wordlists capped at $10^{6}$ entries on the Blackwell GPU (\S\ref{sec:gpu}), comparing the non-hardened baseline against the production \chain{ESC} form on both digit and non-digit targets (\S\ref{sec:bounded_structural}). Because \chain{ESC} is injective and non-printable, recovery does not depend on printable dictionary content, as the full per-rule-file results confirm (Table~\ref{tab:digit_perfile_S}).

\section{Results and Analysis}
\label{sec:results}
This section reports ARMOR's empirical resistance and IoT/edge overhead. We begin with baseline cracking (\S\ref{sec:results_baseline}), then evaluate \textsc{naive} resistance across attack types, rule sets, chain depths, and sensitivity factors (\S\ref{sec:results_attack}--\S\ref{sec:results_sensitivity}). We next quantify \textsc{aware} transformed-target coverage (\S\ref{sec:results_aware}), evaluate escape encoding against the digit-only bypass (\S\ref{sec:bounded_structural}), measure IoT-class hardware overhead (\S\ref{sec:results_perf}), and compare secure-service components (\S\ref{sec:results_secure_service}). We conclude by assessing the design requirements (\S\ref{sec:results_summary}; Table~\ref{tab:design_requirements}).

\subsection{Baseline Vulnerability}
\label{sec:results_baseline}

Under the evaluated offline-cracking threat, legacy hashes are highly exposed. Across valid rule-based cracking measurements, the average baseline crack rate is $86.52\%$ on the A100 GPU (Table~\ref{tab:summary}; Fig.~\ref{fig:tier_avg}), consistent with the high-throughput adversary model of \S\ref{sec:adversary_model}. With a $60$\,s budget, rule-based attacks recover the legacy \lit{MD5} and \lit{SHA-1} baseline hashes at $100\%$ even with \lit{B64}, the smallest evaluated rule file ($64$ rules), on the $10$K dataset.

\subsection{Attack Type and Rule-Set Penetration}
\label{sec:results_attack}
\begin{figure*}[!t]
  \centering
  \subfigure[Average crack rate by tier.]{%
    \includegraphics[width=0.32\linewidth]{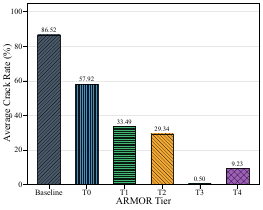}%
    \label{fig:tier_avg}}\hfill
  \subfigure[Crack rate by attack type.]{%
    \includegraphics[width=0.32\linewidth]{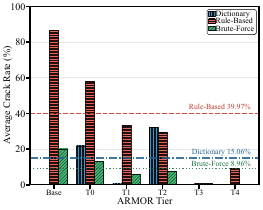}%
    \label{fig:tier_by_attack}}\hfill
  \subfigure[Crack-rate heatmap (tier $\times$ rule file).]{%
    \includegraphics[width=0.35\linewidth]{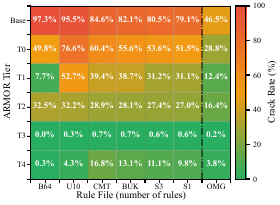}%
    \label{fig:rule_heatmap}}
  \caption{Baseline and \textsc{naive}-adversary attack surface across ARMOR tiers (A100).}
  \label{fig:rule_panel}
    \vspace{-5mm}
\end{figure*}

Rule-based attacks have the highest mean recovery rate across the evaluated configurations ($39.97\%$), followed by dictionary ($15.06\%$) and brute-force ($8.96\%$) attacks, consistent with modern offline-cracking practice. Crack rates are therefore reported primarily against rule-based attacks, with the per-attack breakdown in Fig.~\ref{fig:tier_by_attack}. All three attack types converge to near-zero at T3 against the \textsc{naive} adversary, indicating that structural disruption holds across attack modes when the attacker does not perform ARMOR-aware chain expansion.

Fig.~\ref{fig:rule_heatmap} shows that rule-set size alone does not explain recovery. Across a $64$--$1.27$M-rule span (\texttt{B64} to \texttt{S1}; Table~\ref{tab:rulefiles}), the legacy baseline changes by only $18$ percentage points (pp), while pre-hardened T3 stays near zero ($\leq 0.75\%$; Fig.~\ref{fig:S_naive_diagnostics}(a)). The separate digit-class evaluation identifies a printable-output bypass, suppressed by high-bit hardening and closed by \chain{ESC} under the evaluated stock tools (\S\ref{sec:bounded_structural}). Purely structural T1 chains remain partially vulnerable, while the collapse of pre-hardened T3 to near zero reflects structural disruption of the rule grammar for non-digit inputs rather than a timing-only slowdown.

\textbf{Extended rule-set evaluation.} The largest public rule file evaluated, \texttt{OMG} ($6.72$\,M rules; Table~\ref{tab:rulefiles}), follows the same pattern in the extended evaluation (Table~\ref{tab:eval_scale_S}): T3 recovery remains $0.21\%$ (mean across conditions). Bit-diffusion primitives stay near-zero (\chain{BSL} $0.65\%$, \chain{BSR} $0.03\%$), since no evaluated public rule set models circular bit-rotation. Large rule files, including \texttt{OMG}, contain native \lit{d}/\lit{f} rules, equivalent to ARMOR's \chain{DUP}/\chain{RFL}, so standalone Layer-1 deployments remain vulnerable and are not recommended.

\subsection{Single-Operation ARMOR (T0)}
\label{sec:results_t0}

The T0 tier reduces the \textsc{naive} crack rate from $86.52\%$ to $57.92\%$ on the GPU, with high within-tier heterogeneity (Fig.~\ref{fig:t0_rates}). Bit-diffusion primitives \chain{BSL}/\chain{BSR} resist best because the evaluated attacks do not model circular bit-rotation. \chain{DUP}/\chain{RFL} ($48.32\%$/$39.03\%$) suggest rule-pipeline disruption. The hardened \chain{BSL-X}/\chain{BSR-X} variants suppress the digit-class leakage that leaves base \chain{BSL} at $\approx 1.48\%$ (\S\ref{sec:bounded_structural}). \chain{L-CASE} ($86.69\%$) and other case-normalizing transforms remain vulnerable as expected. Layer-1 primitives vary character order, case, or length within composed chains. They are not intended as standalone defenses against the evaluated attacks.

\subsection{Chained ARMOR Resistance (T1--T4)}
\label{sec:t4_anomaly}

\begin{figure*}[!t]
  \centering
  \subfigure[Per-tier crack-rate CDF.]{%
    \includegraphics[width=0.33\linewidth]{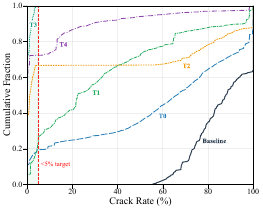}%
    \label{fig:cdf_dist}}\hfill
  \subfigure[T0 primitive rates.]{%
    \includegraphics[width=0.33\linewidth]{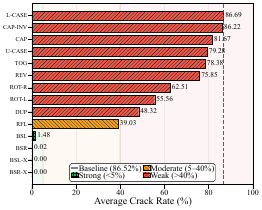}%
    \label{fig:t0_rates}}\hfill
  \subfigure[T4 chains.]{%
    \includegraphics[width=0.33\linewidth]{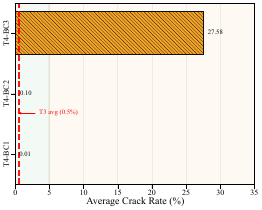}%
    \label{fig:t4_chains}}
  \caption{Chain-depth effects on rule-based \textsc{naive} recovery. Per-tier distributions shift toward $0\%$ as depth increases, with all evaluated rule-based T3 configurations below $5\%$. T4 results include an algebraically degenerate chain accepted by the adjacent-pair validator.}
  \label{fig:chained_resistance}
  \vspace{-5mm}
\end{figure*}

The central \textsc{naive}-resistance finding is a sharp crack-rate collapse with chain depth (Fig.~\ref{fig:tier_avg}). Per-tier rule-based distributions shift toward $0\%$ as depth increases (Fig.~\ref{fig:cdf_dist}). By T3, they form a near-zero spike (Fig.~\ref{fig:S_violin}). A T4 counterexample shows that the production validator's adjacent-pair check does not remove every degenerate chain (\S\ref{sec:chain_constraints}). The following rates are unweighted means over the evaluated rule-based configurations, not bounds on individual tests.

\begin{itemize}
\item \textbf{T1} (2-op structural): $33.49\%$. T1 improves over T0 but remains vulnerable to combined rule application.
\item \textbf{T2} (2-op with \chain{BSL}/\chain{BSR}): $29.34\%$ mean, inflated by inverse-pair \lit{T2-BC3} ($86.56\%$; Table~\ref{tab:perchain_inverse}), where \chain{BSR} cancels \chain{BSL}'s diffusion. The accepted \lit{T2-BC1}/\lit{T2-BC2} chains each have mean recovery $\leq 1.5\%$.
\item \textbf{T3} (3-op, pre-hardened \chain{BSL}/\chain{BSR}): $0.50\%$. The separate input-class experiment characterizes a digit-only bypass and its hardening (\S\ref{sec:bounded_structural}). The mixed-target results alone do not identify the plaintext class of every residual match.
\item \textbf{T4} (4-op): Two chains have near-zero mean recovery, while \lit{T4-BC3} has a mean of $27.58\%$ (Fig.~\ref{fig:t4_chains}). The validator accepts this chain because the inverse rotations are not adjacent. \chain{BSL} and \chain{BSR} rotate bits within each byte, whereas \chain{REV} changes only byte order. Reversal leaves their inverse relation unchanged, so \chain{BSL}$\to$\chain{REV}$\to$\chain{BSR}$\to$\chain{DUP} is equivalent to \chain{REV}$\to$\chain{DUP}.
\end{itemize}

The other two T4 chains each have mean rule-based \textsc{naive} recovery $\leq 0.10\%$. The accepted \lit{T4-BC3} counterexample shows the limit of adjacent-pair pruning, not a failure of fixed-chain injectivity.

\subsection{Sensitivity and Recovery-Rate Distribution}
\label{sec:results_sensitivity}

Secondary variables do not materially change the \textsc{naive}-resistance trend. Over $1{,}800$ rule-based cells per hash algorithm, the unweighted mean is $44.34\%$ for \lit{MD5} and $34.56\%$ for \lit{SHA-512} (Fig.~\ref{fig:S_naive_diagnostics}(b)). Across the $9{,}552$-cell grid of all attack types and configurations, the unweighted mean decreases $2.11$\,pp from $10$K to $1$M targets and increases $14.39$\,pp from $30$ to $600$\,s. Separately, rule-based T3 remains nearly flat (Fig.~\ref{fig:S_sensitivity_diagnostics}(a)--(b)). Distributional plots show all evaluated rule-based T3 configurations below $5\%$ recovery. The T4 outlier is validator-accepted \lit{T4-BC3}, whose non-adjacent rotations cancel (Fig.~\ref{fig:cdf_dist}; Fig.~\ref{fig:S_violin}).

\subsection{ARMOR-Aware Adversary Result}
\label{sec:results_aware}
\begin{table}[!t]
\centering
\caption{\rev{Account-weighted plain dictionary coverage and \textsc{aware} transformed-target coverage for $1{,}000{,}000$ fixed target accounts. A matching candidate-chain pair need not be a valid login credential. The hypotheses count includes all $14^3$ chains per base word.}}
\label{tab:aware_results}
\renewcommand{\arraystretch}{1.2}
\footnotesize
\setlength{\tabcolsep}{2.8pt}
\begin{tabular}{@{}lrrrr@{}}
\toprule
\textbf{Base dictionary $D$} & $10^5$ & $10^6$ & $5{\times}10^6$ & full \\
\midrule
Unique candidates      & $100{,}000$ & $999{,}997$ & $4{,}999{,}899$ & $14{,}053{,}283$ \\
Hypotheses evaluated   & $2.74{\times}10^{8}$ & $2.74{\times}10^{9}$ & $1.37{\times}10^{10}$ & $3.86{\times}10^{10}$ \\
\midrule
Plain dictionary coverage & $0.73\%$ & $7.11\%$ & $35.59\%$ & $100.00\%$ \\
\midrule
\textsc{aware} \lit{T3-HC1} & $1.46\%$ & $9.21\%$ & $38.22\%$ & $100.00\%$ \\
\textsc{aware} \lit{T3-HC2} & $2.94\%$ & $14.62\%$ & $45.70\%$ & $100.00\%$ \\
\textsc{aware} \lit{T3-HC3} & $1.66\%$ & $9.53\%$ & $38.47\%$ & $100.00\%$ \\
\bottomrule
\addlinespace[2pt]
\multicolumn{5}{@{}>{\raggedright\arraybackslash}p{0.95\columnwidth}@{}}{\scriptsize \textit{Note:} Exhaustive byte-domain enumeration over the canonical corpus ($14{,}053{,}727$ eligible entries, $14{,}053{,}283$ unique candidates) without a time budget or an additional candidate-length limit beyond the eligible-corpus filter. Dictionaries are prefixes of the eligible corpus. Distinct-digest coverage, the bridge to the earlier unsalted hashcat experiment, is $9.18\%$, $12.54\%$ and $9.52\%$ at $D{=}10^6$ for HC1/HC2/HC3.}\\
\end{tabular}
\vspace{-2mm}
\end{table}
Following \S\ref{sec:aware_methodology}, the \textsc{aware} evaluation applies all $14^3=2{,}744$ depth-$3$ chains to each base word in four eligible-corpus dictionary prefixes of $100$K, $1$M, $5$M, and the full corpus. It compares transformed bytes against one fixed population of $1$M sampled target accounts. At the $1$M prefix, this gives approximately $2.74$ billion chain hypotheses before cross-account amortization. Enumeration exhausts each prefix, so transformed-target coverage reflects dictionary content and chain equivalence rather than a time budget.

\textbf{Evaluation and production chain spaces.} The historical coverage evaluation enumerates all $14^3{=}2{,}744$ ordered chains in its evaluation library. Distinct chains can produce the same transformed candidate (chain equivalence, \S\ref{sec:chain_constraints}), so the nominal count is not a measured work factor. Production instead draws from the eight injective non-terminal operations, applies the production validator, and appends terminal \chain{ESC}. Its conditional output-class cost is reported separately as $G_k$ (Table~\ref{tab:guessing_number}, Table~\ref{tab:e3_distribution}). These are different libraries and quantities. Neither quantifies the computational cost of the joint unknown-plaintext, salt, and verifier-check attack.

\textbf{Conditional chain uncertainty: the guessing number.} The median conditional guessing number is approximately $52$ expected output-class tests at the primary evaluated depth $k{=}3$ and $317$ at $k{=}5$ (Table~\ref{tab:guessing_number}). Definition~\ref{def:conditional_guessing} in \S\ref{sec:operational_guarantees} specifies the conditional unit-cost model.
The distribution covers the full length-filtered RockYou corpus ($14{,}053{,}727$ eligible entries, $14{,}053{,}283$ unique strings) for the accepted production set $\mathcal{V}_k$. Table~\ref{tab:guessing_number} also reports the accepted-chain count $|\mathcal{V}_k|$ and its ratio to the median $G_k$. The observed corpus minima are $4.54$ expected class tests at $k{=}3$ and $6.56$ at $k{=}5$, derived from the retained class-multiplicity histograms. These observations are not universal worst-case guarantees. The general lower bound is $G_k(p)\geq1$ (\S\ref{sec:operational_guarantees}).
The bounded production experiment below measures salted work and CUDA search time; the minimum cost of an optimized end-to-end attack remains unestablished. Within its conditional unit-cost model, the analysis quantifies chain uncertainty after accounting for equivalent outputs, given the correct base plaintext.

\begin{table}[!t]
\renewcommand{\arraystretch}{1.15}
\caption{Conditional unit-cost guessing number $G_k$ and validator-accepted production chain count over the eligible RockYou entries.}
\label{tab:guessing_number}
\centering
\footnotesize
\setlength{\tabcolsep}{4pt}
\begin{tabular}{@{}rrrrr@{}}
\toprule
$k$ & $|\mathcal{V}_k|$ & $\mathrm{med}_{\mathrm{acct}}[\log_2 G_k]$ & $\mathrm{med}_{\mathrm{acct}}[G_k]$ & $|\mathcal{V}_k|/\mathrm{med}_{\mathrm{acct}}[G_k]$ \\
\midrule
$1$ & $8$        & $2.17$ & $4.5\times$   & $1.8\times$  \\
$2$ & $58$       & $4.01$ & $16.1\times$  & $3.6\times$  \\
$3$ & $422$      & $5.71$ & $52.2\times$  & $8.1\times$  \\
$4$ & $3{,}054$  & $7.10$ & $137.0\times$ & $22.3\times$ \\
$5$ & $21{,}822$ & $8.31$ & $316.6\times$ & $68.9\times$ \\
\bottomrule
\addlinespace[2pt]
\multicolumn{5}{@{}>{\raggedright\arraybackslash}p{0.95\columnwidth}@{}}{\scriptsize \textit{Note:} $\mathrm{med}_{\mathrm{acct}}[\cdot]$ is weighted by multiplicity in the $14{,}053{,}727$ eligible corpus entries over $14{,}053{,}283$ unique candidate strings; it does not use account counts from the original breach. $G_k(p)$ is the conditional guessing number of Definition~\ref{def:conditional_guessing} (expected unit-cost output-class tests, given the correct base plaintext). Column~3 is $\mathrm{med}_{\mathrm{acct}}[\log_2 G_k(p)]$ in bits; column~4 is $2^{\mathrm{med}_{\mathrm{acct}}[\log_2 G_k(p)]}$ in expected class tests at full precision, which the rounded column~3 reproduces only approximately. The $5$th percentile of $\log_2 G_k(p)$ is $2.17$, $3.93$, $5.48$, $6.75$, and $7.84$ bits for $k{=}1$--$5$. Full RockYou corpus, lengths $6$--$25$.}\\
\end{tabular}
\vspace{-3mm}
\end{table}

\textbf{Expansion by input class.} Digit-only passwords are a special case because case maps act as the identity. This class also causes the digit bypass (\S\ref{sec:bounded_structural}). Escape encoding closes the printable bypass. Historical transformed-target coverage enumerates the full evaluation library and does not depend on production-validator pruning.

\begin{revblock}
\textbf{Transformed-target coverage grows with the dictionary.} Plain coverage in Table~\ref{tab:aware_results} rises $0.73\% \to 7.11\% \to 35.59\% \to 100\%$. Transformed-target coverage also increases and reaches $100\%$ for all three chains at the full eligible corpus. The full dictionary contains every sampled password, but the enumerator records any matching chain. It does not distinguish the original password from another candidate that matches only under a different chain.

\textbf{Equal dictionary size is not equal attacker effort.} The two series in Table~\ref{tab:aware_results} use the same dictionary sizes but not the same attacker effort. For each base word the \textsc{aware} attacker evaluates all $14^3$ chains, so at $D{=}10^6$ it performs $2.74\times10^{9}$ hypothesis evaluations against the plain attacker's $10^6$. The table is a transformed-target coverage curve, not an attacker-cost curve. \rev{Table~\ref{tab:guessing_number} separately measures expected unit-cost output-class tests given the correct base plaintext. It does not determine the order of plaintext guesses or quantify the computational cost of salted verifier checks.}

\textbf{Why transformed-target coverage exceeds plain coverage.} The gap is largest for \lit{T3-HC2} ($2.94\%$ against $0.73\%$ at $D{=}10^5$). A base word can match under a chain other than the one assigned. Non-injective historical primitives also permit within-chain collisions, notably the $2$-to-$1$ \chain{BSR-X}. Production excludes those primitives, but injectivity under a fixed chain does not exclude cross-chain matches. The gap therefore does not establish additional login-valid password recovery.
\end{revblock}

\textbf{Chain-selection implication.} \lit{T3-HC2} has the largest transformed-target coverage (\rev{$14.62\%$ account-weighted at $D{=}10^6$, Table~\ref{tab:aware_results}}). Its \chain{DUP} followed by non-bijective \chain{BSR-X} yields $\approx 3.99\%$ fewer distinct transformed values than distinct plaintexts, versus $\le 0.08\%$ for the near-injective \chain{BSL-X}-terminated chains (Appendix~\ref{app:III}). These within-chain collisions motivate the production injectivity requirement. They do not make arbitrary-chain matches proof of credential recovery. The \chain{BSL-X}/\chain{BSR-X} chains remain historical evaluation configurations, separate from the eight-operation production library with terminal \chain{ESC} (\S\ref{sec:injective_escape}).


\textbf{Measured bounded production attack.} Table~\ref{tab:bounded_gpu} reports the custom optimized CUDA replay on two GPUs. At full Stage-1 recovery, ARMOR executes $46{,}152{,}558{,}665$ salted evaluations versus $516{,}541{,}868$ for plain MD5, a ratio of $89.35$. Both GPUs, both kernels and all repeats produce identical account ranks and counted work. This measures one implemented production-chain attack, separately from conditional $G_k$ and the historical coverage curve. All sampled passwords are in the bounded dictionary, so $100\%$ recovery is by construction; the ratio is not a universal lower bound on attacker cost. The schedule is ordered under a unit-cost model, not optimized for GPU-dependent per-candidate cost.

\begin{table}[!t]
\centering
\caption{Bounded production attack on $1{,}000$ paired salted-MD5 targets: optimized CUDA search. Times in seconds are means $\pm$ sample SD over three measured repeats; ratios are ratios of means.}
\label{tab:bounded_gpu}
\footnotesize
\setlength{\tabcolsep}{3pt}
\begin{tabular}{@{}llrrr@{}}
\toprule
GPU & Boundary & ARMOR & Plain & Ratio\\
\midrule
A6000 & Kernel & $9.22875\!\pm\!0.04717$ & $0.07297\!\pm\!0.00026$ & $126.48$\\
A6000 & Search wall & $9.23296\!\pm\!0.04716$ & $0.07701\!\pm\!0.00035$ & $119.89$\\
A100 & Kernel & $7.35657\!\pm\!0.00033$ & $0.06102\!\pm\!0.00176$ & $120.56$\\
A100 & Search wall & $7.36911\!\pm\!0.00068$ & $0.06914\!\pm\!0.00256$ & $106.58$\\
\bottomrule
\end{tabular}
\end{table}

\textbf{From transformed matches to login-valid passwords.} Complete inverse enumeration gives $908$ singleton plaintext sets among the $1{,}000$ recovered targets. Posterior-ordered, simulated login trials recover every original and login-valid password with mean $1.086$, median $1$, and maximum $6$ submissions. Enumeration takes $1.196958$\,s on the previously measured CPU, including $422{,}000$ chain inversions, a corpus-frequency pass and forward checks. These reused same-target results are conditional on the bounded pool and uniform-chain prior. They do not measure live authentication latency or justify adding simulated submissions to offline seconds.

\textbf{Attack-cost scope.} CPU grouping and sorting cost $106.5$\,s on the A6000 host and $233.7$\,s on the A100 host, before subsequent export overhead. Table~\ref{tab:bounded_gpu} excludes that preparation; its reuse must be stated when amortizing across target sets. Reference/optimized reconciliation and a plain-only Hashcat anchor support correctness, but the anchor has different work and timing boundaries. The measurements do not establish fastest attainable attack throughput, a KDF-general multiplier, energy or monetary cost, or superiority to an isolated high-entropy pepper. Appendix~\ref{app:VIII} gives checkpoints, environment, and cost categories. Repeat SD describes timing variation on one target set, not uncertainty across independently sampled populations.

\subsection{Bounded Structural Claim and Digit-Class Bypass}
\label{sec:bounded_structural}
The separate digit-class evaluation identifies a printable-output bypass for digit-only plaintexts under per-byte bit rotation. This subsection characterizes that class, quantifies its exploitability, and shows that production \chain{ESC} hardening closes it under the evaluated \textsc{naive} attacks.

\textbf{Failure mode.} Digit-only inputs are the evaluated bypass class, not the only inputs that can remain printable. Digits \lit{0}--\lit{9} (bytes \lit{0x30}--\lit{0x39}) map under \chain{BSL} to printable bytes $\{\lit{0x60}, \lit{0x62}, \lit{0x64}, \ldots, \lit{0x72}\}$, namely the backtick and every other lowercase letter (\lit{b}, \lit{d}, \ldots, \lit{r}). More generally, bytes \lit{0x20}--\lit{0x3f} remain printable after \chain{BSL}. We measured digit-only recovery, not the prevalence or recovery of this wider class. Per-rule-file results are in Table~\ref{tab:digit_perfile_S}.

\begin{figure}[!t]
  \centering
  \includegraphics[width=\linewidth]{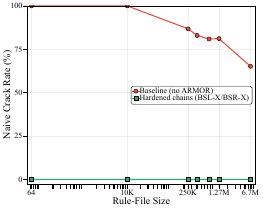}
  \caption{Hardened \textsc{naive} crack rate across rule-file sizes.}
  \label{fig:hardened_rulesize}
  \label{fig:hardening_panel}
  \vspace{-3mm}
\end{figure}

\begin{table}[!t]
\centering
\caption{Digit-class recovery by stock \textsc{naive} guessers.}
\label{tab:nonrule_guessers}
\renewcommand{\arraystretch}{1.15}
\footnotesize
\begin{tabular}{@{}lrr@{}}
\toprule
\textbf{Guesser} & \textbf{Unhardened} & \textbf{ESC-hardened} \\
\midrule
Rule-based hashcat & $45.3\%$ & $0\%$ \\
Markov-ordered masks & $30.9\%$ & $0\%$ \\
PCFG & $0.67\%$ & $0\%$ \\
PRINCE & $0.06\%$ & $0\%$ \\
\bottomrule
\addlinespace[2pt]
\multicolumn{3}{@{}>{\raggedright\arraybackslash}p{0.95\columnwidth}@{}}{\scriptsize \textit{Note:} Recovery on $100$K digit-class targets. The rule-based maximum covers $16$ rule files and wordlists up to $10^6$ entries, run to exhaustion (\S\ref{sec:digit_class_methodology}). Masks/PRINCE use $600$\,s budgets. PCFG records describe a $\sim 50$M-candidate file with unrecorded time/budget (\S\ref{sec:methodology}). These are not equal-effort comparisons. Zeros describe the evaluated stock tools, not \textsc{aware} chain search.}\\
\end{tabular}
\end{table}

\textbf{Bypass severity and hardening.} Without hardening, a \textsc{naive} attack using hashcat across $16$ public rule files (through \texttt{OMG}, $6.72$M rules) and dictionaries up to $10^6$ words recovers as much as $45.3\%$ of digit-only T3 targets. Per-byte rotation maps digits into printable ASCII, enabling rule substitutions. Recovery on the evaluated non-digit targets is $\leq 0.10\%$. Digit-only passwords account for $4$--$19\%$ of length-filtered RockYou (Table~\ref{tab:digit_prevalence}). Table~\ref{tab:hardening_tradeoff_main} compares recovery and injectivity across hardening forms. Appendix~\ref{app:II} explains the high-bit OR/XOR mechanisms.

\begin{table}[!t]
\centering
\caption{Maximum \textsc{naive} recovery by hardening form.}
\label{tab:hardening_tradeoff_main}
\footnotesize
\renewcommand{\arraystretch}{1.15}
\setlength{\tabcolsep}{3pt}
\begin{tabularx}{\columnwidth}{@{}>{\raggedright\arraybackslash}Xrrc@{}}
\toprule
\textbf{Hardening form} & \textbf{Digit} & \textbf{Non-digit} & \textbf{Injective}\\
\midrule
None (\chain{BSL}/\chain{BSR}) & $45.3\%$ & $0.10\%$ & N/A\\
High-bit OR ($\lor\,\lit{0x80}$) & $0.10\%$ & $0.04\%$ & no\\
\chain{BSL-XOR}/\chain{BSR-XOR} ($\oplus\,\lit{0x80}$) & $0.10\%$ & $1.08\%$ & yes\\
\chain{BSL-XOR}/\chain{BSR-XOR} $+$ length expansion & $0.003\%$ & $0.17\%$ & yes\\
\chain{ESC} (escape encoding) & $\mathbf{0.000\%}$ & $\mathbf{0.000\%}$ & yes\\
\bottomrule
\addlinespace[2pt]
\multicolumn{4}{@{}>{\raggedright\arraybackslash}p{\columnwidth}@{}}{\scriptsize \textit{Note:} Maxima for the evaluated digit and non-digit targets. Zero recovery concerns the tested \textsc{naive} attacks, not \textsc{aware} chain search. The hardening mechanisms are described in Appendix~\ref{app:II}.}\\
\end{tabularx}
\end{table}

\textbf{Escape-encoding result.} Evaluated attacks recover $0/100$K \chain{ESC}-hardened targets across all $384$ configurations ($3$ chains $\times$ $16$ rule files $\times$ $4$ wordlist sizes, $1$K--$1$M, $\times$ $2$ target classes). Table~\ref{tab:digit_perfile_S} reports per-rule-file outcomes. The per-configuration $95\%$ upper bound is $\leq 0.003\%$.

\textbf{Validation and mechanism.} The broader hardened evaluation covers $588$ configurations on both GPUs and is consistent with Table~\ref{tab:hardening_tradeoff_main}. Mixed-target \textsc{naive} recovery stays $\leq 0.01\%$ flat across rule-file sizes (Fig.~\ref{fig:hardened_rulesize}) and attack budgets (Fig.~\ref{fig:S_sensitivity_diagnostics}(c)), while the unprotected baseline spans $22.5$--$100\%$. Per-crack rule logging attributes residual digit-class recovery to high-bit substitution rules on repeated-character words. On the $10$-symbol digit domain, high-bit hardening reduces to a finite substitution that large rule sets can express. \chain{ESC}'s length expansion defeats it. Length-altering primitives can also disrupt hashcat's fixed-width kernels, but length-preserving \chain{BSL}/\chain{BSR} resist comparably, so structural grammar disruption is the primary defense.

\textbf{Robustness beyond rule files.} We also evaluated masks with Markov-based candidate ordering~\cite{narayanan2005fast} (hashcat \lit{-a 3}), \textsc{pcfg}~\cite{weir2009password}, and \textsc{prince}~\cite{steube2014prince} against $100$K digit-class targets on the Blackwell GPU using the guesser-specific procedures of \S\ref{sec:methodology}. Table~\ref{tab:nonrule_guessers} reports their partial recovery of unhardened targets and zero recovery of escape-hardened targets. The escape outputs non-printable bytes (\lit{0x80}--\lit{0x8F}), which these stock printable-only guessers do not emit. An attacker can keep each guesser's ranking over base plaintexts and apply candidate chains before hashing. The zero-recovery result therefore measures the reach of unmodified tools, not hardness against the defined \textsc{aware} adversary (\S\ref{sec:results_aware}).

\subsection{Performance on Resource-Constrained Hardware}
\label{sec:results_perf}

\begin{table*}[!t]
\renewcommand{\arraystretch}{1.15}
\caption{Direct ARMOR latency across IoT-class platforms and input lengths.}
\label{tab:embedded_latency}
\centering
\footnotesize
\setlength{\tabcolsep}{3pt}
\begin{tabular}{llrrrrrr}
\toprule
\textbf{Platform} & \textbf{Length} & \textbf{BSL} & \textbf{T3-HC1} & \textbf{T3-HC2} & \textbf{T3-HC3} & \textbf{T4-HC1} & \textbf{ESC-HC1} \\
\midrule
\multirow{4}{*}{\shortstack[l]{Cortex-M4\\@ 80\,MHz\\\scriptsize(NUCLEO-L432KC)}}
  &  6 & $93$ c ($1.16\,\mu$s) & $218$ c ($2.73\,\mu$s) & $354$ c ($4.43\,\mu$s) & $352$ c ($4.40\,\mu$s) & $473$ c ($5.91\,\mu$s) & $419$ c ($5.24\,\mu$s) \\
  & 11 & $153$ c ($1.91\,\mu$s) & $335$ c ($4.19\,\mu$s) & $579$ c ($7.24\,\mu$s) & $585$ c ($7.31\,\mu$s) & $768$ c ($9.60\,\mu$s) & $702$ c ($8.78\,\mu$s) \\
  & 14 & $189$ c ($2.36\,\mu$s) & $414$ c ($5.18\,\mu$s) & $714$ c ($8.93\,\mu$s) & $717$ c ($8.96\,\mu$s) & $945$ c ($11.81\,\mu$s) & $864$ c ($10.80\,\mu$s) \\
  & 20 & $261$ c ($3.26\,\mu$s) & $561$ c ($7.01\,\mu$s) & $984$ c ($12.30\,\mu$s) & $1{,}029$ c ($12.86\,\mu$s) & $1{,}299$ c ($16.24\,\mu$s) & $1{,}236$ c ($15.45\,\mu$s) \\
\midrule
\multirow{4}{*}{\shortstack[l]{Cortex-M33\\@ 150\,MHz\\(Pico~2W)}}
  &  6 & $89$ c ($0.59\,\mu$s) & $202$ c ($1.35\,\mu$s) & $334$ c ($2.23\,\mu$s) & $330$ c ($2.20\,\mu$s) & $440$ c ($2.93\,\mu$s) & $342$ c ($2.28\,\mu$s) \\
  & 11 & $149$ c ($0.99\,\mu$s) & $307$ c ($2.05\,\mu$s) & $552$ c ($3.68\,\mu$s) & $554$ c ($3.69\,\mu$s) & $725$ c ($4.83\,\mu$s) & $571$ c ($3.81\,\mu$s) \\
  & 14 & $185$ c ($1.23\,\mu$s) & $378$ c ($2.52\,\mu$s) & $678$ c ($4.52\,\mu$s) & $678$ c ($4.52\,\mu$s) & $896$ c ($5.97\,\mu$s) & $698$ c ($4.65\,\mu$s) \\
  & 20 & $257$ c ($1.71\,\mu$s) & $510$ c ($3.40\,\mu$s) & $930$ c ($6.20\,\mu$s) & $975$ c ($6.50\,\mu$s) & $1{,}246$ c ($8.31\,\mu$s) & $1{,}001$ c ($6.67\,\mu$s) \\
\midrule
\multirow{4}{*}{\shortstack[l]{Xtensa LX6\\@ 240\,MHz\\\scriptsize (ESP32 DevKit)}}
  &  6 & $105$ c ($0.44\,\mu$s) & $240$ c ($1.00\,\mu$s) & $335$ c ($1.40\,\mu$s) & $411$ c ($1.71\,\mu$s) & $532$ c ($2.22\,\mu$s) & $455$ c ($1.90\,\mu$s) \\
  & 11 & $175$ c ($0.73\,\mu$s) & $403$ c ($1.68\,\mu$s) & $560$ c ($2.33\,\mu$s) & $688$ c ($2.87\,\mu$s) & $897$ c ($3.74\,\mu$s) & $757$ c ($3.15\,\mu$s) \\
  & 14 & $217$ c ($0.90\,\mu$s) & $512$ c ($2.13\,\mu$s) & $695$ c ($2.90\,\mu$s) & $836$ c ($3.48\,\mu$s) & $1{,}116$ c ($4.65\,\mu$s) & $920$ c ($3.83\,\mu$s) \\
  & 20 & $301$ c ($1.25\,\mu$s) & $716$ c ($2.98\,\mu$s) & $965$ c ($4.02\,\mu$s) & $1{,}229$ c ($5.12\,\mu$s) & $1{,}554$ c ($6.48\,\mu$s) & $1{,}343$ c ($5.60\,\mu$s) \\
\midrule
\multicolumn{8}{>{\raggedright\arraybackslash}p{\dimexpr\linewidth-2\tabcolsep\relax}}{\scriptsize \textit{Note:} Cycle counts are $N{=}100$ minima after a $100$-iteration warmup, with interrupts masked; microseconds use the listed clock. ARM DWT counters were used on Cortex-M4/M33 and Xtensa CCOUNT on ESP32. Chains: T3-HC1=\texttt{ROT-R$\to$REV$\to$BSL-X}, T3-HC2=\texttt{DUP$\to$BSR-X$\to$ROT-L}, T3-HC3=\texttt{TOG$\to$RFL$\to$BSL-X}, T4-HC1=\texttt{ROT-L$\to$BSR-X$\to$RFL$\to$TOG}, ESC-HC1=\texttt{TOG$\to$DUP$\to$ESC}. ESC-HC1 remains below T4-HC1 on every board.}\\
\bottomrule
\end{tabular}
  \vspace{-6mm}
\end{table*}

Fig.~\ref{fig:security_cost}(a) reports per-operation ARMOR latency on the Pi~3 Linux reference. A representative T3 chain completes in $\approx 0.3$--$0.6\,\mu$s. For \chain{BSL-X}, overhead scales linearly at $\approx 8$\,ns per character (Table~\ref{tab:armor_overhead_S}), matching the per-primitive $O(|P|)$ term. Composing $k$ primitives gives $O(k \cdot |P|)$ chain complexity. Every evaluated chain stays below $0.7\,\mu$s across all length bands (Fig.~\ref{fig:security_cost}(b)), below one \lit{MD5} hash ($\approx 1.46\,\mu$s) and orders of magnitude below memory-hard KDF costs (Fig.~\ref{fig:security_cost}(c)). Latency alone does not provide offline resistance. The per-account conditional cost, reported as $G_k$ (Table~\ref{tab:guessing_number}) rather than the nominal $|\mathcal{O}|^k$ chain count, requires CR isolation (DR-4, \S\ref{sec:security_privilege}).

\textbf{Direct measurement across three IoT-class platforms.} Bare-metal measurements confirm ARMOR remains low-latency beyond the Pi~$3$: the heaviest T3 chain stays $\leq 12.86\,\mu$s across the three boards of \S\ref{sec:iot_edge_platforms} (Cortex-M4, Cortex-M33, and Xtensa~LX6). Table~\ref{tab:embedded_latency} reports per-platform minimum cycle counts and derived latencies across all four input lengths for the hardened chains (\lit{T3-HC1}--\lit{T3-HC3}, \lit{T4-HC1}) and the production \lit{ESC-HC1}. On the Cortex-M33, the $20$-byte \lit{T3-HC1} and \lit{T3-HC3} observations are $3.40\,\mu$s and $6.50\,\mu$s, respectively. A heap-allocation-ceiling test confirms the expected \lit{Argon2id} infeasibility: all boards top out at $\leq 256$\,KiB allocatable heap, below the evaluated OWASP-recommended $(19\,\mathrm{MiB},2,1)$ profile (\S\ref{sec:kdf_infeasibility}).

\textbf{Comparison to per-user peppering.} As a cryptographic-overhead baseline, we measured one HMAC-\lit{SHA-256} on the same three boards using portable software \lit{SHA-256}, validated against RFC~4231 and timed with the same interrupt-masked procedure. One HMAC over a short ($\leq 20$-byte) password costs $\approx 24{,}000$ cycles on the Cortex-M33 ($\approx 160\,\mu$s at $150$\,MHz). The Cortex-M4 and ESP32 require $\approx 337\,\mu$s and $\approx 125\,\mu$s, respectively. Across boards, HMAC is $\approx 19$--$21\times$ slower than ARMOR's evaluated \lit{T4-HC1} chain and $\approx 24$--$26\times$ slower than the slowest evaluated T3 chain (Table~\ref{tab:embedded_latency}). ARMOR is lighter in these pre-storage compute measurements and uses compact per-account chain state rather than high-entropy per-user pepper keys (\S\ref{sec:peppering}). These HMAC measurements use software \lit{SHA-256} for cross-platform parity. An on-chip SHA accelerator may narrow the latency gap. Peppering still requires protected key storage and a keyed cryptographic operation, while ARMOR requires bytewise preprocessing and CR isolation.

\subsection{Secure-Service Cost: ARMOR versus Peppering}
\label{sec:results_secure_service}

\rev{Table~\ref{tab:secure_service} reports separately timed compute, secure-storage, and boundary components for the two secure services of \S\ref{sec:secure_service_methodology}. Both use one Pico~2W TF-M image, with the CR held in ITS under dummy provisioning. The table sums measured component minima for each comparator, not complete authentication times.}

\begin{table}[!t]
\renewcommand{\arraystretch}{1.15}
\caption{\rev{Measured secure-service components and their summed minima on Pico~2W at $150$\,MHz under TF-M Isolation~$2$. The ARMOR component sum uses the validator-accepted maximum-expansion $k{=}5$ stress case.}}
\label{tab:secure_service}
\centering
\footnotesize
\setlength{\tabcolsep}{4pt}
\begin{tabular}{@{}lrr@{}}
\toprule
\textbf{Operation} & \textbf{Cycles} & \textbf{Latency} \\
\midrule
\multicolumn{3}{@{}l}{\textit{Keyed pre-hash compute}} \\
ARMOR, $k{=}1$ (single operation)            & $3{,}816$   & $25.4\,\mu$s \\
ARMOR, $k{=}5$ (max.\ expansion)     & $6{,}978$   & $46.5\,\mu$s \\
HMAC pepper, key held live           & $62{,}170$  & $414.5\,\mu$s \\
HMAC pepper, per-user key from ITS   & $125{,}775$ & $838.5\,\mu$s \\
\midrule
\multicolumn{3}{@{}l}{\textit{Secure storage and boundary}} \\
Chain-registry lookup (ARMOR)        & $279{,}901$ & $1{,}866.0\,\mu$s \\
Pepper key lookup (per-user)         & $285{,}080$ & $1{,}900.5\,\mu$s \\
Boundary crossing                    & $8{,}633$   & $57.6\,\mu$s \\
\midrule
\multicolumn{3}{@{}l}{\textit{Measured-component sums, ARMOR $k{=}5$}} \\
ARMOR                                & $295{,}512$ & $1{,}970.1\,\mu$s \\
Per-user pepper                      & $419{,}488$ & $2{,}796.6\,\mu$s \\
Global pepper, key cached            & $70{,}803$  & $472.0\,\mu$s \\
\bottomrule
\addlinespace[2pt]
\multicolumn{3}{@{}>{\raggedright\arraybackslash}p{0.95\columnwidth}@{}}{\scriptsize \textit{Note:} Values are minima of $256$ calls after warmup with a $15$-character password. Sums combine separately timed compute, applicable storage, and a no-op boundary estimate, not complete authentication times. ARMOR excludes chain validation, whereas per-user pepper compute includes key import. See \S\ref{sec:secure_service_methodology} for the procedure and full exclusions.}\\
\end{tabular}
\vspace{-3mm}
\end{table}

\rev{The ARMOR stress case is the validator-accepted chain \lit{REV}$\to$\lit{ROT-L}$\to$\lit{DUP}$\to$\lit{DUP}$\to$\lit{DUP}, followed by terminal \lit{ESC}. It exercises the production limit of three length-expanding operations without collapsing to a shorter chain. This selection tests maximum expansion, not a security-optimal depth or the worst execution time across all chains and inputs. For the $15$-character input, the minimum measured transform time within the secure service is $6{,}978$\,c. The secure-service and bare-metal measurements use different firmware builds and execution environments. Their transform times are therefore not interchangeable, and the difference cannot be attributed solely to isolation overhead.}

\rev{With its key imported during setup before timing, the PSA Crypto HMAC call has a $62{,}170$\,c minimum, or $8.91\times$ the measured ARMOR compute minimum. This ratio compares only the separately timed compute components. The PSA call includes framework work as well as HMAC computation, unlike the bare-metal software-HMAC measurement of \S\ref{sec:results_perf} ($\approx 24{,}000$\,c). It is distinct from the per-user ITS-read/import path, and no framework-overhead breakdown was measured. This build uses software \lit{SHA-256}. Hardware-accelerated HMAC was not evaluated, so the observed ratio does not establish a hardware-independent advantage.}

\rev{\textbf{Measured service components.} ARMOR reads a per-account chain from secure storage on every authentication. A per-user pepper also requires a per-account read, while the global-pepper comparator keeps one key live. These different storage costs dominate the measured sums.}

\rev{The per-account storage reads are similar at $279{,}901$\,c for ARMOR and $285{,}080$\,c for the per-user pepper, within $1.9\%$. Summing their measured service components gives $295{,}512$\,c for ARMOR and $419{,}488$\,c for the per-user pepper, so the per-user-pepper sum is $1.42\times$ ARMOR's. The cached global pepper has a $70{,}803$\,c component sum. ARMOR's is $4.17\times$ that sum. These sums combine separate minima for a $15$-character input. They are not complete authentication times.}

\rev{For the evaluated $k{=}5$ case, the ITS lookup required approximately $40\times$ as many cycles as the transform. Protected storage, rather than bytewise computation, dominates this service comparison. Caching per-account state would change the measured trade-off and was not evaluated here.}

\rev{Protected high-entropy per-user peppering provides stronger cryptographic protection under DB-only compromise. ARMOR instead combines compact protected per-account chain state with bytewise preprocessing and preserves the credential-record format. It targets resource- and compatibility-constrained deployments by providing bounded chain uncertainty under CR isolation (\S\ref{sec:peppering}). The measurements show low transform cost in the evaluated implementation and identify protected-storage access as the dominant component of both per-account services.}

\iffull

\fi

\subsection{Operational Implications: Attribution and Duress}
\label{sec:attribution_duress}

ARMOR's per-user keyed chain suggests two unevaluated operational extensions beyond offline-cost amplification, both relying on verifier-side chain state and CR isolation.

\textbf{1. Leak attribution.} Let $K_u$ denote account $u$'s production chain. Its transform is the injective mapping $P\mapsto\chain{ESC}(T_{K_u}(P))$. An attacker who recovers and rediscloses the transformed candidate rather than the plaintext may leave a value the CR-holding defender can test against stored verifiers. For each matching account, the defender can invert the transform to obtain a plaintext candidate. Per-account injectivity gives one plaintext for a fixed account and chain, but it does not guarantee a unique originating account across the store. This is a possible attribution signal, not a measured tracing guarantee. An attacker who knows the chain can instead publish the plaintext and erase that signal. Quantifying ambiguity and evasion requires a dedicated evaluation (\S\ref{sec:conclusion}).

\textbf{2. Duress signaling.} The same per-account indirection can support a panic-password variant. \rev{The CR may hold a second, duress chain for an account, which authenticates normally but raises a silent alarm. This costs a second stored verifier, so it gives up DR-1's literal schema preservation. Accepting two plaintexts for one account is a policy choice, not a pre-hash equivalence, since each chain over $\mathcal{O}_{\mathrm{inj}}$ stays injective (Lemma~\ref{lem:production_injectivity}).} Admitting one additional accepted chain per account changes the per-account search space by a small constant factor within the validator-bounded library $|\mathcal{V}_k|$. Duress signaling is therefore a possible design corollary of ARMOR's keyed-chain indirection rather than a separately evaluated mechanism.

\subsection{Assessment against Design Requirements}
\label{sec:results_summary}
Table~\ref{tab:design_requirements} maps the five design requirements to their supporting analysis and evidence. The construction preserves the credential-record format for completed enrollments (DR-1) and provides fixed-chain injectivity with local chain-validation rules (DR-2). The accepted T4 degeneration limits local pruning without invalidating fixed-chain injectivity. The evaluation quantifies conditional unit-cost chain uncertainty (DR-3) and implementation costs on the tested platforms (DR-5). The chain-search benefit requires CR isolation (DR-4), which remains a deployment assumption beyond the evaluated prototype. Table~\ref{tab:summary} summarizes selected experimental findings.

\section{Conclusion}
\label{sec:conclusion}

ARMOR adds a finite per-account chain-search subproblem under DB-only compromise, provided the CR remains isolated; CR exposure removes this chain-search benefit. The production construction preserves the credential-record schema for completed enrollments and uses injective operations followed by a fixed terminal escape encoding, preventing transform-induced pre-hash equivalence under each assigned chain. Its conditional search model additionally requires independent, uniform chain assignment over the accepted production set. Although the validator constrains chain selection, the accepted T4 degeneration demonstrates that adjacent-pair validation does not exclude every algebraic collapse.

The attack experiments distinguish stock-tool candidate coverage from chain-aware search. Unhardened chains leave a digit-only bypass, with stock-tool recovery reaching $45.3\%$, whereas no targets are recovered in the evaluated $100{,}000$-target escape-hardened digit-only configurations. This result characterizes the tested tools' candidate coverage. Exhaustive historical-library enumeration, in turn, matches every transformed target at full dictionary coverage, but an arbitrary-chain match need not identify a login-valid password.

For production chains, the median conditional guessing number over the eligible RockYou corpus is approximately $52$ expected output-class tests at depth $k=3$ and $317$ at $k=5$. These values assume the correct base password and unit cost per class test; they do not measure the cost of jointly searching unknown passwords and chains across salted accounts. The separate bounded production attack executes $89.35\times$ the plain arm's salted checks for $1{,}000$ paired targets, with identical recovery ranks on A6000 and A100. Reused inverse enumeration and posterior-ordered simulated submissions recover all original passwords within that pool. These findings characterize the implemented attack and do not establish a universal minimum work factor.

On the defender side, implementation measurements support low transform cost on constrained hardware, while showing that secure storage can dominate deployment overhead. On the Pico~2W, the measured component minima sum to approximately $1.97$\,ms for the depth-$5$ stress case. This sum excludes per-call chain validation and the DB lookup, final hashing, and comparison needed for complete authentication. ARMOR therefore offers bounded chain uncertainty under explicit isolation and assignment conditions, while protected high-entropy peppers provide stronger cryptographic protection. KDF composition remains applicable when resources permit and the KDF accepts the complete transformed output.

\subsection{Future Work}

Future work will explore three extensions of ARMOR: duress-aware authentication, evaluating alarm reliability and resistance to misuse; forensic attribution of leaked transformed credentials, accounting for ambiguity and adversarial evasion; and Chain Registry lifecycle management, including secure chain rotation and recovery. These directions extend ARMOR's per-account chain structure beyond the offline-guessing setting evaluated here.

\appendices
\raggedbottom
\setlength{\subfigtopskip}{0pt}
\setlength{\subfigbottomskip}{0pt}
\setlength{\subfigcapskip}{2pt}
\section{Chain-Validator Enumeration}\label{app:I}
The production enrollment validator accepts only chains that avoid the mechanically detectable adjacent degeneracies stated in the main paper. It rejects an adjacent cross-primitive inverse pair, specifically \lit{(BSL, BSR)} or \lit{(ROT-L, ROT-R)}, and an adjacent repeated involution, \chain{TOG}\,\chain{TOG} or \chain{REV}\,\chain{REV}. This local rule does not eliminate every non-adjacent algebraic equivalence. The $G_k$ calculation groups chains with the same output into equivalence classes. Let $e(K)$ count the length-doubling \chain{DUP} and \chain{RFL} operations in chain $K$, as in \S~IV-E of the main paper. Production enrollment also requires $e(K)\le e_{\max}=3$. Let $\mathcal{V}_k\subseteq\mathcal{O}_{\mathrm{inj}}^{\,k}$ denote the chains accepted by both checks. These chains contain no adjacent cancellation-inverse pair, no adjacent repeated involution, and at most three length-doubling operations. Assignment checks take $O(k)$ at enrollment. The Trusted Firmware-M (TF-M) prototype also revalidates each retrieved chain during authentication. Every $|\mathcal{V}_k|$ reported here (Table~\ref{tab:e3_distribution}, Appendix~\ref{app:VI}) is the size of this jointly constrained set. The length bound is inactive for $k\le 3$ and removes further chains at $k{=}4$ and $k{=}5$.

The historical transformed-target coverage evaluation is not a second production validator. It exhaustively enumerates the complete $|\mathcal{O}|{=}14$ evaluation library, including non-injective and case-normalizing operations, with $14^3=2{,}744$ orderings at depth three. Its coverage does not depend on production-validator pruning. A match under any chain does not establish that the candidate authenticates under the assigned chain. Production chain counts and $G_k$ use only the eight-operation set $\mathcal{O}_{\mathrm{inj}}$ and the enrollment validator above, which applies both the adjacency rules and the length bound (Table~\ref{tab:e3_distribution}).

Definition~V.1 and Proposition~2 in \S~V-F of the main paper define $G_k(p)$ for the accepted production set $\mathcal{V}_k$ and prove its bounds. Table~V of the main paper reports the account-weighted median of $\log_2 G_k(p)$ in bits and its corresponding linear factor.

\section{Injective Escape-Encoding Details}\label{app:II}
A production hardening primitive must satisfy two requirements simultaneously. First, because the verifier checks a digest over $\textsc{armor}(P,K)$ and the account salt, the transform must be injective. Otherwise, distinct plaintexts can authenticate to the same account. Second, closing the digit-only bypass requires the transformed output to leave the printable-ASCII grammar modeled by rule-based password crackers.

The evaluated high-bit byte maps satisfy only one requirement at a time. The high-bit OR form used in \chain{BSL-X}/\chain{BSR-X} maps printable inputs outside standard printable ASCII, but it is $2$-to-$1$ and can create authentication collisions. The alternative \chain{BSL-XOR}/\chain{BSR-XOR} form ($\oplus\,\lit{0x80}$) is injective, but it does not universally escape: some non-digit inputs can re-enter printable ASCII and raise \textsc{naive} recovery in the hardening comparison (Table~VIII of the main paper). This limitation is intrinsic to byte-preserving maps, since a byte bijection cannot map all $256$ input values into the $161$ non-printable byte values.

The escape primitive \chain{ESC} resolves the conflict by expanding each byte into two non-printable bytes:
\[
b\mapsto(\lit{0x80}\,|\,b{\gg}4,\ \lit{0x80}\,|\,b\,\&\,\lit{0xf}).
\]
The mapping is injective because the two nibbles recombine to recover $b$, and it is universally escaping because both output bytes lie in the non-printable range used by ARMOR's byte-level disruption. This doubles the pre-hash byte length via a lookup-style byte expansion, but the stored verifier remains a fixed-size digest, so the credential schema is unchanged. The specified production construction uses $\mathcal{O}_{\mathrm{inj}}$ and appends a fixed terminal \chain{ESC} to every accepted chain. In the main results, escape-ending chains yield no recovered targets in the evaluated \textsc{naive} attacks across all rule files, digit and non-digit targets, and dictionaries up to $10^6$ words.

\section{Authentication-Equivalence Details}\label{app:III}

The non-injective variants in the evaluation library require careful interpretation. The $1$M sampled RockYou entries contain $D=999{,}997$ distinct plaintexts. For a fixed chain with $U$ distinct transformed values, the distinct-value loss is $(D-U)/D$. It is $\le 0.08\%$ for the near-injective \chain{BSL-X} chains and $3.99\%$ for \chain{DUP$\to$BSR-X$\to$ROT-L}. This is not the fraction of plaintexts belonging to shared-output classes. The main paper (\S V-D) distinguishes output collapse from per-account false-accept exposure. The production library removes this transform-induced equivalence by replacing non-injective high-bit-OR diffusion with injective \chain{ESC} and excluding case maps from $\mathcal{O}_{\mathrm{inj}}$. Every production chain is therefore injective. Deployments that deliberately tolerate case-insensitive login may retain case maps as a deployment policy, accepting bounded case-structured equivalence.

Lemma~1 in \S~V-D of the main paper establishes fixed-chain production injectivity and gives its proof. The result concerns distinct passwords under the same accepted chain. With the same salt, equal verifiers would require a collision of the underlying hash. It does not establish uniqueness across different chains. Appendix~\ref{app:II} gives the escape-encoding details used in the proof.

\section{Per-User Peppering Comparison Details}\label{app:IV}

The closest peppering analog is per-user peppering, which assigns each user a unique high-entropy secret and shares ARMOR's per-user isolation property. A protected 128-bit per-user \lit{pepper} contributes $2^{128}$ to the per-user search space. ARMOR contributes a bounded conditional chain subproblem. The nominal production count $|\mathcal{O}_{\mathrm{inj}}|^k=8^k$ is an upper bound, while the production result is reported by the smaller multiplicity-aware $G_k(p)$. Per-user peppering is therefore cryptographically stronger wherever it can be provisioned and managed correctly, but it is operationally heavier on RAM-constrained credential repositories.

ARMOR and peppering have different failure modes. ARMOR's bit-level primitives push ASCII inputs outside the character-space grammar modeled by hashcat rule sets. Per-user peppering adds secret material but does not change the password-mutation grammar. \rev{If an independently protected \lit{pepper} is exposed while the chain remains secret, the pepper benefit is lost but the conditional ARMOR chain subproblem remains. If the Chain Registry (CR) is exposed, an attacker with the account-to-chain mapping can apply each transform before hashing, removing ARMOR's chain-search benefit. The underlying verifier or key derivation function (KDF), and any independently protected pepper, retain their own protection. A mismatch with unmodified stock tools does not protect against an attacker who knows the chain.} An assigned ARMOR chain can be represented in $\lceil k\log_2|\mathcal{O}|\rceil$ bits using the $14$-operation evaluation library ($12$ bits at $k{=}3$). The production-theoretic minimum over the accepted set is $\lceil\log_2|\mathcal{V}_3|\rceil = 9$ bits at $k{=}3$. These are encoding capacities, not measured serialized-record sizes or security advantages. They are smaller than a high-entropy per-user \lit{pepper}, \rev{while the credential database (DB) stores no additional per-user state at all.}

As an embedded-cost reference, a post-hashing \lit{pepper} scheme based on a hash-based message authentication code (HMAC) adds a keyed-hash invocation~\cite{owaspPasswordStorageCheatSheet}. Across the evaluated embedded targets, software HMAC-\lit{SHA-256} is $\approx 19$--$21\times$ slower than ARMOR's worst-case \lit{T4-HC1} chain and $\approx 24$--$26\times$ slower than the slowest evaluated T3 chain. These timings are a footprint reference for crypto-accelerator-less microcontrollers, not a security comparison against peppering. Where both mechanisms are deployable, they can be layered as \lit{Hash}(\lit{pepper}$\|$\lit{ARMOR(password)}) or \lit{KDF}(\lit{pepper}$\|$\lit{ARMOR(password)}), provided the KDF accepts the full combined input.

\iffull\else
\begin{table}[!ht]
\centering
\renewcommand{\arraystretch}{1.2}
\caption{\small Evaluation and production chain-space quantities.}
\label{tab:e3_distribution}
\footnotesize
\begin{tabular}{@{}lr@{}}
\toprule
\textbf{Quantity} & \textbf{Value} \\
\midrule
Evaluation library, nominal depth-$3$ chains & $14^3=2{,}744$ \\
Production library, nominal depth-$3$ chains & $8^3=512$ \\
Validator-accepted production chains, $|\mathcal{V}_3|$ & $422$ \\
Account-weighted median production $G_3$ & $52.2$ class tests \\
\midrule
\multicolumn{2}{>{\raggedright\arraybackslash}p{\dimexpr\linewidth-2\tabcolsep\relax}}{\scriptsize \textit{Note.} Production uses eight injective non-terminal operations, both enrollment checks in Appendix~\ref{app:I}, and terminal \chain{ESC}. $G_3$ is conditional on the correct base plaintext and is not an end-to-end cracking multiplier.}\\
\bottomrule
\end{tabular}
\end{table}

\begin{table}[!ht]
\centering
\renewcommand{\arraystretch}{1.2}
\caption{\small Pi~3 ARMOR primitive cost for 20-character inputs.}
\label{tab:armor_overhead_S}
\footnotesize
\setlength{\tabcolsep}{2pt}

\begin{tabular}{>{\raggedright\arraybackslash}p{0.49\linewidth}|r|c}
\toprule
\textbf{Category} & \textbf{Latency ($\mu$s)} & \textbf{\% of one MD5} \\
\midrule
Structural (\texttt{REV}, \texttt{ROT}, \texttt{DUP}, \texttt{RFL}) & $0.06$--$0.08\,\mu$s & $4$--$6\%$ \\
Case maps (\texttt{L/U-CASE}, \texttt{CAP}, \texttt{CAP-INV})$^{\ast}$ & $0.14$--$0.15\,\mu$s & $\approx 10\%$ \\
Case toggle (\texttt{TOG})$^{\dagger}$       & $0.25\,\mu$s         & $17\%$ \\
\midrule
Rotate: \texttt{BSL-1} / \texttt{BSR-1}      & $0.15\,\mu$s         & $11\%$ \\
Hardened: \texttt{BSL-X} / \texttt{BSR-X}  & $0.16\,\mu$s         & $11\%$ \\
\midrule
\multicolumn{3}{>{\raggedright\arraybackslash}p{\dimexpr\linewidth-2\tabcolsep\relax}}{\scriptsize \textit{Note.} Minimum of $4000$-call batches after baseline subtraction. BSL-X/BSR-X differs from BSL-1/BSR-1 by less than measurement noise. $^{\ast}$Non-injective and evaluation-only. $^{\dagger}$Injective and retained in $\mathcal{O}_{\mathrm{inj}}$.}\\
\bottomrule
\end{tabular}
\end{table}
\fi

\section{Tables}\label{app:V}
Tables~\ref{tab:iot_edge_platforms}--\ref{tab:summary} provide the additional platform, attack-coverage, and design-evidence details cited in the main text.
\begin{table}[!ht]
\begingroup
\caption{IoT and edge evaluation platforms.}
\label{tab:iot_edge_platforms}
\centering
\footnotesize
\setlength{\tabcolsep}{3pt}
\begin{tabularx}{\linewidth}{@{}>{\raggedright\arraybackslash}p{0.23\linewidth}X@{}}
\toprule
\textbf{Platform} & \textbf{Role and hardware} \\
\midrule
Pi~$3$ Model~B & Linux-class Cortex-A53 edge-gateway reference; ARMv8 @ $1.2$\,GHz, $1$\,GB LPDDR2; only platform compared head-to-head with \chain{Argon2id}. \\
NUCLEO-L432KC & Bare-metal Cortex-M4; STM32L432 @ $80$\,MHz, $64$\,KiB SRAM, $256$\,KiB Flash; Arduino-STM32 firmware. \\
Pico~$2$W & Bare-metal Cortex-M33 with TrustZone-M; RP2350 @ $150$\,MHz, $520$\,KiB SRAM, $4$\,MiB Flash; Arduino-Pico firmware. \\
ESP32 DevKit & Non-ARM IoT cross-architecture point; Xtensa LX6 @ $240$\,MHz, $520$\,KiB SRAM, $4$\,MiB Flash. \\
\bottomrule
\end{tabularx}
\endgroup\par\medskip
\begingroup
\centering
\caption{RockYou digit-only prevalence.}
\label{tab:digit_prevalence}
\renewcommand{\arraystretch}{1.2}
\footnotesize
\setlength{\tabcolsep}{3pt}
\begin{tabular}{@{}rrrr@{}}
\toprule
\textbf{Top-N window} & \textbf{Digit-only} & \textbf{Total} & \textbf{\% Digit-only} \\
\midrule
$100$           & $11$        & $100$           & $11.00\%$ \\
$500$           & $41$        & $500$           & $8.20\%$ \\
$2{,}000$       & $100$       & $2{,}000$       & $5.00\%$ \\
$10{,}000$      & $427$       & $10{,}000$      & $4.27\%$ \\
$100{,}000$     & $19{,}052$  & $100{,}000$     & $19.05\%$ \\
$1{,}000{,}000$ & $115{,}731$ & $1{,}000{,}000$ & $11.57\%$ \\
\bottomrule
\end{tabular}\par
\endgroup
\end{table}

\begin{table*}[!htp]
\centering
\caption{\small ARMOR and KDF deployment tradeoffs.}
\label{tab:ARMOR_vs_kdf_comprehensive}
\footnotesize
\setlength{\tabcolsep}{3pt}
\begin{tabular}{@{}>{\raggedright\arraybackslash}p{0.15\textwidth}>{\raggedright\arraybackslash}p{0.408\textwidth}>{\raggedright\arraybackslash}p{0.332\textwidth}@{}}
\toprule
\textbf{Aspect} & \textbf{ARMOR} & \textbf{KDFs (\tablelit{Argon2}, \tablelit{scrypt}, \tablelit{bcrypt})} \\
\midrule

\textbf{Primary defense} & Adds a hidden per-user transform and a bounded conditional output-class search subproblem when the CR remains isolated. Its measured unit-cost magnitude is $G_k(p)$, not the nominal $|\mathcal{O}|^k$. & Deters brute-force attacks by making each hash computation prohibitively slow and resource-intensive. \\
\addlinespace[0.5mm]
\hline

\textbf{Core mechanism} & Deterministic per-user keyed transformation chain applied to plaintext before hashing. & Deliberately slow, iterative algorithm with configurable work factor and memory cost. \\
\addlinespace[0.5mm]
\hline

\textbf{Integration} & Pre-hash step at enrollment and authentication. Use \tablelit{KDF(ARMOR(password))} only where the full output fits the KDF input limit and \tablelit{Hash(ARMOR(password))} on constrained devices. & Terminal function that replaces the hash primitive directly. \\
\addlinespace[0.5mm]
\hline

\textbf{Computational cost} & Fixed and low: $O(k \cdot |P|)$ byte/string work per authentication. & Configurable and high, tunable to remain resistant as hardware improves. \\
\addlinespace[0.5mm]
\hline

\textbf{Memory usage} & Compact chain state plus bounded working buffers; tested on sub-megabyte embedded hardware. & Tens of MiB or more for recommended memory-hard settings (e.g., the $64$\,MiB RFC~9106 Argon2id profile), infeasible at security-meaningful settings on Cortex-M class devices~\cite{rfc9106}. \\
\addlinespace[0.5mm]
\hline

\textbf{Resistance type} & Operational cost amplification under Kerckhoffs' Principle~\cite{Petitcolas2025}. No cryptographic-hardness claim. & Cryptographic-strength resistance proportional to the configured work factor. \\
\bottomrule
\end{tabular}
\end{table*}

\iffull\else

\begin{table*}[!htp]
\centering
\begin{minipage}[t]{0.45\textwidth}
\centering
\renewcommand{\arraystretch}{1.05}
\caption{\small Digit-only T3 recovery by rule file.}
\label{tab:digit_perfile_S}
\scriptsize
\setlength{\tabcolsep}{2pt}
\begin{tabular}{@{}l r c@{}}
\toprule
\textbf{Rule file} & \textbf{NH \%} & \textbf{ESC \%} \\
\midrule
\texttt{SuperUnicorn} & $45.27$ & $0.000$ \\
\texttt{Pantacorn} & $40.42$ & $0.000$ \\
\texttt{UnicornLorge} & $36.16$ & $0.000$ \\
\texttt{omgwtfbbq} & $34.52$ & $0.000$ \\
\texttt{sapphire\_v1} & $34.11$ & $0.000$ \\
\texttt{buka\_400k} & $32.38$ & $0.000$ \\
\texttt{concentrator\_MT\_250000} & $27.80$ & $0.000$ \\
\texttt{sapphire\_v3} & $27.01$ & $0.000$ \\
\texttt{collection500k} & $26.58$ & $0.000$ \\
\texttt{HashMob.150k} & $19.88$ & $0.000$ \\
\texttt{Unicorn20k} & $9.87$ & $0.000$ \\
\texttt{Unicorn10k} & $7.09$ & $0.000$ \\
\texttt{Unicorn5k} & $5.25$ & $0.000$ \\
\texttt{Unicorn3k} & $4.43$ & $0.000$ \\
\texttt{Unicorn1k} & $2.56$ & $0.000$ \\
\texttt{best64} & $1.11$ & $0.000$ \\
\midrule
\multicolumn{3}{>{\raggedright\arraybackslash}p{\dimexpr\linewidth-2\tabcolsep\relax}}{\scriptsize \textit{Note.} \textsc{naive} recovery using hashcat on $100$K digit-only targets with a dictionary to $10^6$ on Blackwell. \textbf{NH} denotes the non-hardened T3 chain \chain{ROT-R}$\to$\chain{REV}$\to$\chain{BSL}. \textbf{ESC} denotes production escape.}\\
\bottomrule
\end{tabular}
\end{minipage}\hfill
\begin{minipage}[t]{0.54\textwidth}
\renewcommand{\arraystretch}{1.08}
\caption{\small Offline-evaluation coverage.}
\label{tab:eval_scale_S}
\raggedright
\scriptsize
\setlength{\tabcolsep}{3pt}
\begin{tabular}{@{}>{\raggedright\arraybackslash}p{0.26\linewidth}>{\raggedright\arraybackslash}p{0.20\linewidth}>{\raggedright\arraybackslash}p{0.44\linewidth}@{}}
\toprule
\textbf{Evaluation block} & \textbf{Platform} & \textbf{Coverage} \\
\midrule
Main \textsc{naive} evaluation & A100 & Rule, dictionary, and brute-force attacks over the tier configurations; four hash algorithms; fixed time budgets of $30$/$60$/$300$/$600$\,s \\
Extended \textit{OMG} evaluation & A100 & $21$ configurations $\times$ $4$ hashes $\times$ $3$ target sizes, at four fixed time budgets \\
Hardened evaluation & A100 + Blackwell & Hardened configurations plus baseline \\
\textsc{aware} chain-expanded evaluation & CPU-exact & $3$ hardened T3 chains $\times$ $4$ dictionary prefixes, fixed $1$M-account targets \\
\midrule
Digit-class hardening study & Blackwell & Escape-encoding; $16$ rule files; digit and non-digit targets; dictionary to $10^6$ \\
Non-rule guessers & Blackwell & Markov, \textsc{prince}, and \textsc{pcfg} on plain and escape digit-class targets \\
\midrule
\multicolumn{3}{>{\raggedright\arraybackslash}p{\dimexpr\linewidth-2\tabcolsep\relax}}{\scriptsize \textit{Note.} The blocks use different counting units, so no aggregate experiment count is defined. RockYou-derived target sets use seed $s=42$. PCFG denotes probabilistic context-free grammar. \textsc{aware} runs continue to candidate-stream exhaustion.}\\
\bottomrule
\end{tabular}
\end{minipage}
\end{table*}

\fi

\iffull\else

\begin{table*}[!tp]
\centering
\begin{minipage}[t]{0.53\textwidth}
\renewcommand{\arraystretch}{1.2}
\caption{\small Per-chain \textsc{naive} crack rates for selected T2 and T4 chains (unweighted means across evaluated rule-based configurations, A100).}
\label{tab:perchain_inverse}
\centering
\footnotesize
\setlength{\tabcolsep}{3pt}
\begin{tabular}{llc}
\toprule
\textbf{Tier / chain} & \textbf{Validator} & \textbf{Mean crack rate} \\
\midrule
\multicolumn{3}{@{}l}{\textit{T2 ($2$-op)}}\\
\quad\lit{T2-BC1}: \chain{BSL}$\to$\chain{ROT-L} & accepted & $\leq 1.5\%$ \\
\quad\lit{T2-BC2}: \chain{BSR}$\to$\chain{REV}   & accepted & $\leq 1.5\%$ \\
\quad\lit{T2-BC3}: \chain{BSL}$\to$\chain{BSR}   & rejected & $86.56\%$ \\
\midrule
\multicolumn{3}{@{}l}{\textit{T4 ($4$-op)}}\\
\quad\lit{T4-BC1}: \chain{ROT-L}$\to$\chain{BSR}$\to$\chain{RFL}$\to$\chain{TOG} & accepted & $\leq 0.10\%$ \\
\quad\lit{T4-BC2}: \chain{DUP}$\to$\chain{BSL}$\to$\chain{REV}$\to$\chain{ROT-L} & accepted & $\leq 0.10\%$ \\
\quad\lit{T4-BC3}: \chain{BSL}$\to$\chain{REV}$\to$\chain{BSR}$\to$\chain{DUP}   & accepted & $27.58\%$ \\
\midrule
\multicolumn{3}{>{\raggedright\arraybackslash}p{\dimexpr\linewidth-2\tabcolsep\relax}}{\scriptsize \textit{Note.} Each mean equally weights the $288$ rule-based configurations for that chain. Inequalities bound chain means, not individual tests. The adjacent inverse pair in \lit{T2-BC3} is rejected. \lit{T4-BC3} passes the local validator but has a non-adjacent algebraic degeneration that raises recovery.}\\
\bottomrule
\end{tabular}
\end{minipage}\hfill
\begin{minipage}[t]{0.45\textwidth}
\renewcommand{\arraystretch}{1.2}
\caption{Rule-file characteristics.}
\label{tab:rulefiles}
\centering
\footnotesize
\setlength{\tabcolsep}{2pt}
\begin{tabular}{>{\raggedright\arraybackslash}p{0.31\linewidth}lrrrr}
\hline
\multicolumn{1}{l}{\textbf{Rule File}} & \multicolumn{1}{l}{\textbf{Short}} & \multicolumn{1}{r}{\textbf{Rules}} & \multicolumn{1}{r}{\textbf{\lit{d}$\,\geq$}} & \multicolumn{1}{r}{\textbf{\lit{f}$\,\geq$}} & \multicolumn{1}{r}{\textbf{b-rot}} \\
\hline
best64       & \texttt{B64} & $64$            & $2$       & $1$       & $0$ \\
Unicorn10k   & \texttt{U10} & $10{,}000$      & $27$      & $1$       & $0$ \\
concentrator\_MT\_\allowbreak250000 & \texttt{CMT} & $250{,}000$     & $519$     & $103$     & $0$ \\
buka\_400k   & \texttt{BUK} & $399{,}986$     & $1{,}601$ & $169$     & $0$ \\
sapphire\_v3 & \texttt{S3}  & $749{,}997$     & $2{,}919$ & $621$     & $0$ \\
sapphire\_v1 & \texttt{S1}  & $1{,}270{,}145$ & $8{,}397$ & $4{,}743$ & $0$ \\
omgwtfbbq    & \texttt{OMG} & $6{,}724{,}974$ & $26{,}201$& $12{,}281$& $0$ \\
\midrule
\multicolumn{6}{@{}p{0.95\linewidth}@{}}{\scriptsize \textbf{Short}: code used in the figures. \lit{d}/\lit{f}: native DUP/RFL rule counts (lower bounds). \textbf{b-rot}: bit-rotation operators (all $0$; hashcat rules define none).} \\
\bottomrule
\end{tabular}
\vspace{-3mm}
\end{minipage}
\end{table*}

\fi



%

\begin{table*}[!htp]
\centering
\caption{Comparison of server-side password-storage mechanisms}
\label{tab:ARMOR_vs_all}
\scriptsize
\setlength{\tabcolsep}{3pt}
\begin{tabular}{llllll}
\toprule
\textbf{Mechanism} & \textbf{Primary Goal} & \textbf{Nature} & \textbf{Scope} & \textbf{Memory Cost} & \textbf{Embedded Viable} \\
\midrule
Salting   & Uniqueness            & Random, non-secret          & Per-password  & Negligible  & Yes \\
Peppering & Obfuscation           & Secret key, static          & System-wide   & Negligible  & Yes \\
Memory-hard KDF & Computational cost    & Slow algorithm              & Per-password  & Profile-dependent& No for evaluated $19$\,MiB profile  \\
ARMOR     & Adversary cost amplification & Per-user keyed transformation & Per-user& Bounded  & Yes \\
\bottomrule
\end{tabular}
\end{table*}

\begin{table*}[!htp]
\caption{Threat scope of ARMOR.}
\label{tab:threat_scope}
\centering
\footnotesize
\begin{tabularx}{\textwidth}{@{}>{\raggedright\arraybackslash}p{0.075\textwidth} >{\raggedright\arraybackslash}p{0.205\textwidth} >{\raggedright\arraybackslash}p{0.66\textwidth}@{}}
\toprule
\textbf{Scope} & \textbf{Threat Class} & \textbf{Rationale} \\
\midrule
\textbf{In} & \textit{Offline dictionary/rule-based cracking} & DB-only attacker lacks the CR mapping. An \textsc{aware} attacker enumerates hidden chains, while a \textsc{naive} attacker is disrupted by rule-grammar mismatch. The bounded digit-only case is analyzed in \S~VII-G. \\\cline{2-3}
& \textit{Precomputation} & Transformed candidates can be generated once per plaintext and reused across accounts. Account salts prevent reuse of precomputed verifier hashes across accounts. Device or gateway binding may further limit transformed-candidate reuse. \\
\midrule
\textbf{Out} & \textit{Online brute-force} & Requires network- and application-layer defenses such as rate limiting and lockout policies. \\\cline{2-3}
& \textit{Credential disclosure} & Phishing, social engineering, and coercion obtain the plaintext directly, bypassing stored-verifier protection. \\\cline{2-3}
& \textit{Compromise of CR} & Exposure of \rev{$\lit{user\_id}\mapsto K$} removes ARMOR's chain-search benefit. The attacker still faces the underlying hash or KDF and any independently protected pepper. \\\cline{2-3}
& \textit{Client-side malware} & Malware captures credentials before any verifier-side protection can be applied. \\
\bottomrule
\end{tabularx}
\end{table*}

\begin{table*}[!tp]
\renewcommand{\arraystretch}{1.3}
\caption{ARMOR System Design Requirements}
\label{tab:design_requirements}
\centering
\footnotesize
\begin{tabularx}{\textwidth}{>{\bfseries}p{1.0cm}>{\raggedright\arraybackslash}X>{\raggedright\arraybackslash}p{2.2cm}}
\hline
\textbf{ID} & \textbf{Requirement} & \textbf{Evidence} \\
\hline
DR-1 & \textit{Schema Transparency:} Preserve the stored verifier format and require no protocol or client-interface changes; \rev{no state is added to the credential record}. & \rev{\S~IV-B3,~IV-D} \\
DR-2 & \textit{Functional Integrity \& Extensibility.} Reject the specified adjacent inverse pairs and repeated involutions at enrollment. Require injective production chains so no two plaintexts share an ARMOR pre-hash value under the same chain. & \S~IV-E,~IV-F \\
DR-3 & \textit{Adversary Cost Amplification:} \rev{For a correct base plaintext, impose a measured conditional unit-cost output-class guessing number $G_k$ (bounded above pointwise by the production distinct-output count $E^{\mathrm{prod}}_k(p)$ and the nominal production chain count $|\mathcal{O}_{\mathrm{inj}}|^k$); $G_k$ does not measure the joint unknown-plaintext and salted-check attack.} Defender work remains $O(k \cdot |P|)$; operational, not cryptographic, hardness. & \S~V-F,~VII-F \\
DR-4 & \textit{Separation of Compromise Domains:} Place the credential DB and CR in independent compromise domains so DB-only exfiltration does not reveal the per-user keyed chain. & \S~V-C \\
DR-5 & \textit{Resource Compatibility.} Keep the transform microsecond-scale on bare-metal Internet of Things (IoT) hardware and use bounded working memory. Report the measured secure-service component sums for a 15-character input separately from full authentication. & \S~VII-H,~VII-I \\
\hline
\end{tabularx}
\vspace{-1mm}
\end{table*}

\begin{table*}[!tp]
\renewcommand{\arraystretch}{1.1}
\caption{Summary of Key Experimental Findings}
\label{tab:summary}
\centering
\setlength{\tabcolsep}{2pt}
\footnotesize
\begin{tabular}{@{}p{0.34\textwidth}ccc}
\toprule
\textbf{Configuration} & \shortstack{\textbf{GPU} \textsc{naive}\\\scriptsize rule-based} & \shortstack{\rev{\textsc{aware} \textbf{coverage}}\\\scriptsize \rev{$D{=}10^6$, $14^3$ chains}} & \shortstack{\textbf{Verifier overhead}\\\scriptsize Pi~3 transform} \\
\midrule
Baseline                 & 86.52\%  & N/A & N/A \\
T0 (single-op)            & 57.92\%  & N/A & N/A \\
T1 (2-op structural)      & 33.49\%  & N/A & N/A \\
T2 (2-op, \chain{BSL}/\chain{BSR}) & 29.34\%  & N/A & N/A \\
\tablelit{T3-HC1}                    & $\leq$0.01\% & \rev{9.21\%}  & $0.28\,\mu$s \\
\tablelit{T3-HC2}                    & $\leq$0.01\% & \rev{14.62\%} & $0.42\,\mu$s \\
\tablelit{T3-HC3}                    & $\leq$0.01\% & \rev{9.53\%}  & $0.58\,\mu$s \\
\tablelit{T4-HC1}                    & $\leq$0.01\% & N/A & $0.62\,\mu$s \\
\midrule
\multicolumn{4}{@{}p{0.95\textwidth}@{}}{\scriptsize \textit{Note.} Graphics-processing-unit (GPU) \textsc{naive} values for Baseline and T0--T2 are unweighted means over the evaluated rule-based configurations. Hardened HC values are mixed-target maxima. \textsc{aware} values are account-weighted transformed-target coverage from main-paper Table IV, not login-valid password recovery. Digit-only results appear in main-paper Table~VIII.} \\
\bottomrule
\end{tabular}
\vspace{-4mm}
\end{table*}

\setlength{\dbltextfloatsep}{10pt plus 2pt minus 2pt}
\section{Enrollment-Leakage Simulation}\label{app:VI}

This section supports the chosen-password enrollment adversary of \S~V-B. The adversary controls
accounts $A$ ($|A|=m$) and targets a victim $v \notin A$. After a full breach of the verifier store,
it holds every salt and stored record. For its own accounts it also knows the plaintext it chose.
It can recover each of its own chains up to the equivalence class
$C_i=\{c \in \mathcal{V}_k : \textsc{armor}(P_i,c) = \textsc{armor}(P_i,K_i)\}$.
Let $E_A$ contain observations from the adversary's own enrollments. It excludes the victim's verifier even
though that verifier is part of the breached DB. The simulation gives the predictor the exact
chains $K_i$ assigned to the adversary's accounts rather than only their classes $C_i$. Its null result concerns leakage through
own-enrollment observations under independent assignment, not the entire breached-DB view.

\textbf{Method.} For each scheme and depth $k=1\ldots5$, we simulate $10^6$ enrollments, except
$2\times10^5$ for the linear congruential generator (\lit{lcg}) diagnostic. The corresponding
$|\mathcal{V}_k|$ values are $8$, $58$, $422$, $3054$, and $21822$. We use $10^5$ victims, adversary
sizes $m \in \{10^3,10^4,10^5\}$. The $\chi^2$ tests in T1--T3 use $\alpha=10^{-3}$. The specified sampler uses a
cryptographically secure pseudorandom number generator (CSPRNG). T1 tests marginal uniformity with
$\chi^2$ goodness of fit. T2 tests bucketed pair dependence and reports lag-$1$ through lag-$8$
autocorrelation. T3 tests account-identifier dependence with $\chi^2$ and reports mutual
information relative to a permutation null. T4 flags posterior top-$r$ hit rates more than three
estimated standard errors from chance. Its top-$1$ accuracy is descriptive. T5 describes
equivalence-class structure over $1000$ RockYou passwords and is not a leakage hypothesis test.
T4 compares two predictors.
\lit{freq} ranks candidate chains by their observed frequency across the adversary's accounts.
\lit{struct} fits an affine map from account identifier to chain index. The comparison tests whether
account identifiers reveal chain choices even when the marginal assignment frequency is uniform.

\textbf{Result.} For the specified \lit{csprng} sampler using \lit{os.urandom} and the seeded
reproducibility arm, T1--T3 produced no rejection at any tested depth, and T4 raised no
three-standard-error flag at any tested depth or adversary size. T5 has no rejection criterion.
Table~\ref{tab:enrollment_leakage} compares the specified sampler with five deliberate violations
of its assignment conditions at $m=10^5$.

\begin{table*}[!tp]
\centering
\caption{Top-$1$ prediction accuracy for a victim's chain by assignment scheme and depth at $m=10^5$ adversary enrollments.
Parentheses show accuracy divided by chance. \lit{csprng} is the specified sampler. Multipliers use
unrounded simulation values. The $k{=}5$ column shows five decimals so the multipliers follow
from the printed accuracies.}
\label{tab:enrollment_leakage}
\footnotesize
\begin{tabular}{@{}l ccccc@{}}
\toprule
\textbf{Scheme} & $k{=}1$ & $k{=}2$ & $k{=}3$ & $k{=}4$ & $k{=}5$ \\
\midrule
\lit{csprng} \textit{(spec)} & 0.1254 ($1\times$) & 0.0165 ($1\times$) & 0.0022 ($1\times$) & 0.0003 ($1\times$) & 0.00006 ($1\times$) \\
\lit{round\_robin} & \textbf{1.0000} ($8\times$) & \textbf{1.0000} ($58\times$) & \textbf{1.0000} ($422\times$) & \textbf{1.0000} ($3054\times$) & \textbf{1.00000} ($21822\times$) \\
\lit{id\_derived} & \textbf{1.0000} ($8\times$) & \textbf{1.0000} ($58\times$) & \textbf{1.0000} ($422\times$) & \textbf{1.0000} ($3054\times$) & \textbf{1.00000} ($21822\times$) \\
\lit{small\_pool} & 0.1259 ($1\times$) & 0.0621 ($4\times$) & 0.0624 ($26\times$) & 0.0623 ($190\times$) & 0.06238 ($1361\times$) \\
\lit{modbias8} & 0.1264 ($1\times$) & 0.0185 ($1\times$) & 0.0039 ($2\times$) & 0.0041 ($13\times$) & 0.00412 ($90\times$) \\
\lit{lcg} & 0.1250 ($1\times$) & 0.0173 ($1\times$) & 0.0025 ($1\times$) & 0.0003 ($1\times$) & 0.00010 ($2\times$) \\
\bottomrule
\end{tabular}
\end{table*}

\lit{round\_robin} and \lit{id\_derived} are exactly uniform over $\mathcal{V}_k$. T1 finds no
marginal nonuniformity for either at any depth, yet both are completely predictable. The \lit{freq}
predictor cannot distinguish \lit{id\_derived} from the specified sampler, while \lit{struct}
recovers its assignments. \lit{small\_pool} saturates at $1/16$, its
pool size, however large $\mathcal{V}_k$ becomes. The practical consequence is that a deployment
cannot validate its chain assignment by testing the output distribution for uniformity. The
requirements are that draws be CSPRNG-generated and independent, drawn from the full accepted set
$\mathcal{V}_k$ (chance repetition of a chain across accounts is expected and permitted, not a defect),
and not derivable from anything the adversary can observe, including account identifier, email address,
timestamp, or sequence position. Confidentiality of $K$ within the CR is an architectural property rather than a
sampler property and is not simulated here.

\section{Figures}\label{app:VII}
Figures~\ref{fig:S_naive_diagnostics}--\ref{fig:security_cost} provide the additional recovery and overhead diagnostics cited in \S~VII of the main paper.
\begingroup
\setlength{\intextsep}{6pt}
\begin{figure*}[!tbp]

  \centering
  \subfigure[Recovery vs.\ rule-file size.\label{fig:S_ruleset}]{%
  \begin{minipage}[t]{0.45\textwidth}
    \centering
    \includegraphics[width=\linewidth]{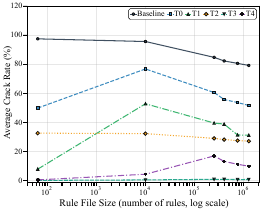}
  \end{minipage}}\hfill
  \subfigure[Hash algorithm.\label{fig:S_algo}]{%
  \begin{minipage}[t]{0.45\textwidth}
    \centering
    \includegraphics[width=\linewidth]{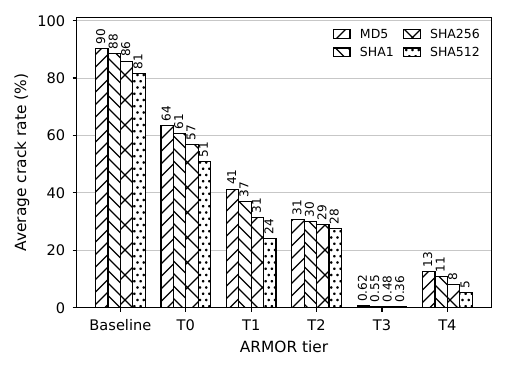}
  \end{minipage}}

  \caption{Additional \textsc{naive}-recovery diagnostics on A100. \textsc{naive} recovery does not scale monotonically with rule-file size, with T3 near zero across the $64$--$1.27$M-rule span, and hash-algorithm effects are secondary to the Baseline-to-T3 gap.}
  \label{fig:S_naive_diagnostics}
\end{figure*}
\endgroup

\begin{figure*}[!tbp]
  \centering
  \includegraphics[width=0.55\linewidth]{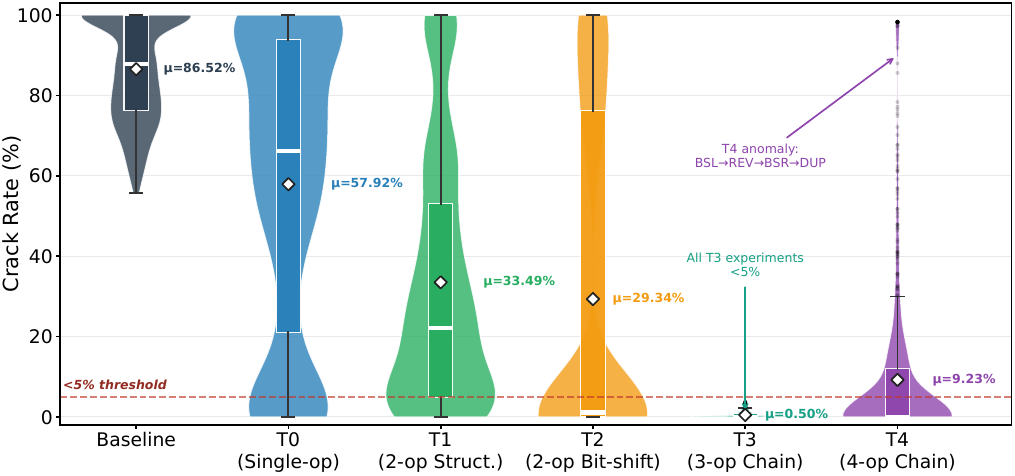}
  \caption{Distribution of rule-based \textsc{naive} crack rates by tier (A100). T3 forms a near-zero spike. The T4 bimodality reflects the validator-accepted \tablelit{T4-BC3} (\chain{BSL$\to$REV$\to$BSR$\to$DUP}) chain's non-adjacent algebraic degeneration.}
  \label{fig:S_violin}
\end{figure*}
\begin{figure*}[!tbp]
\setcounter{subfigure}{0}
  \centering
  \subfigure[Dataset-size sensitivity.\label{fig:S_size}]{%
  \begin{minipage}[t]{0.332\textwidth}
    \centering
    \includegraphics[width=\linewidth]{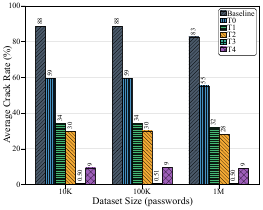}
  \end{minipage}}\hfill
  \subfigure[Duration sensitivity.\label{fig:S_duration}]{%
  \begin{minipage}[t]{0.332\textwidth}
    \centering
    \includegraphics[width=\linewidth]{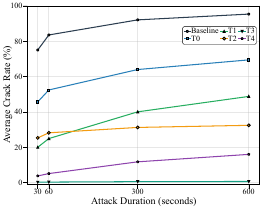}
  \end{minipage}}\hfill
  \subfigure[Hardened vs.\ budget.\label{fig:S_hardened_budget}]{%
  \begin{minipage}[t]{0.332\textwidth}
    \centering
    \includegraphics[width=\linewidth]{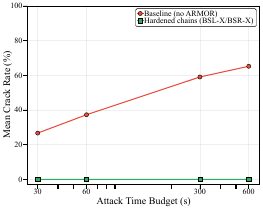}
  \end{minipage}}

  \caption{Crack-rate sensitivity diagnostics on A100. Baseline/T0 improve with longer runs while T3 changes negligibly. Recovery remains stable across two orders of magnitude in target-set size. Hardened recovery is stable across attack budgets.}
  \label{fig:S_sensitivity_diagnostics}
\end{figure*}
\begin{figure*}[!tbp]
\setcounter{subfigure}{0}
  \centering
  \subfigure[Pi~3 primitive latency.\label{fig:pi_latency}]{%
  \begin{minipage}[t]{0.32\textwidth}
    \centering
    \includegraphics[width=\linewidth]{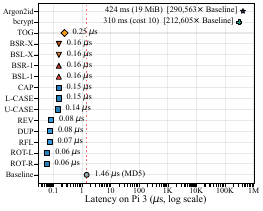}
  \end{minipage}}\hfill
  \subfigure[Latency vs.\ input length.\label{fig:latency_length_panel}]{%
  \begin{minipage}[t]{0.32\textwidth}
    \centering
    \includegraphics[width=\linewidth]{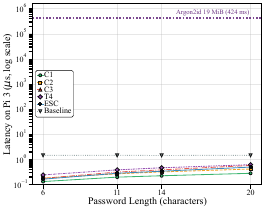}
  \end{minipage}}\hfill
  \subfigure[Latency: ARMOR vs.\ KDFs.\label{fig:security_hierarchy}]{%
  \begin{minipage}[t]{0.32\textwidth}
    \centering
    \includegraphics[width=\linewidth]{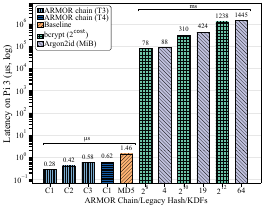}
  \end{minipage}}
  \caption{Defender-side IoT/edge overhead. ARMOR remains microsecond-scale on the Pi~3 reference and stays far below memory-hard KDF costs.}
  \label{fig:security_cost}
\end{figure*}

\section{Bounded Production Attack and Login Recovery}\label{app:VIII}
\label{supp:bounded_production}
This experiment uses the specified production construction, assigned accepted $k{=}3$ chains with fixed terminal \chain{ESC}; it is separate from the $14$-operation historical coverage experiment. The attacker receives salted verifiers, public primitives and the bounded candidate corpus, but not the victim's chain or the grader's ground truth. The first $10^6$ eligible $6$--$25$-byte RockYou entries contain $999{,}997$ unique candidates. We use the first $1{,}000$ retained targets of a $10{,}000$-target set; all passwords occur in this pool. The historical hash-bound generator uses SHA-256 counter/rejection sampling of eligible entries without replacement, seed \lit{ARMOR-W2-production-targets-v1}, and nested target prefixes. Domain-separated functions derive each assigned chain index and $16$-byte salt from that seed and account ordinal. The bounded prior is nearly flat: $999{,}994$ candidates occur once; \lit{daniel}, \lit{te amo}, and \lit{i love u} each occur twice. For ARMOR, enumeration of $999{,}997\times422=421{,}998{,}734$ candidate-chain pairs yields $132{,}773{,}053$ distinct transformed values, of maximum length $400$ bytes. Plain candidates are ordered by descending frequency, then first occurrence. The two arms use paired passwords and the same $16$-byte salts.

\textbf{Schedule, batching, and correctness.} Transformed hypotheses are ordered by descending aggregate frequency times chain multiplicity, with lexicographic byte ties, under the empirical corpus prior and an independent uniform chain-assignment model. The measured work ratio applies to this implemented order and single target set. This is a unit-cost ordering, not a GPU-cost-aware optimal attack. CUDA replays the CPU-built order; the reference and optimized engines share that order and the same adaptive target-removal policy. Length-appropriate short and general MD5 paths process the actual bytes. For $R$ schedule records, batch size is $\max(1,\min(\lfloor R/200\rfloor,2^{32}-1))$. ARMOR uses $663{,}865$ hypotheses per batch and stops after $200$ batches ($132{,}773{,}000$ hypotheses); plain uses $4{,}999$ and $201$ batches including the final partial batch. All targets are recovered before ARMOR's final $53$ untested hypotheses; this is full target recovery, not full schedule exhaustion.

Every batch hashes against its start-of-batch active-target snapshot. If $A_j$ accounts are active and $H_j$ hypotheses are executed, work is $H_j A_j$, including tests performed after a target's first hit inside that batch. Atomic minimum ranks identify the first matching hypothesis. All account ranks, batch work, checkpoint first crossings and overshoot counts reconcile with the frozen CPU-verified reference; reference and optimized details are byte-identical on both GPUs. Device self-tests include byte-domain transforms and MD5 length boundaries. Each suite performs one warm-up and three measured repetitions; all measured repeats, including the slower first plain repeat on A100, are retained.

\begin{table}[!ht]
\centering\footnotesize
\caption{Checkpoint work and actual target recovery, identical on both GPUs and both kernels. Nominal checkpoints can overshoot.}
\label{tab:S_bounded_checkpoints}
\setlength{\tabcolsep}{3pt}
\begin{tabular}{@{}rrrrr@{}}
\toprule
Nominal & Recovered A/P & ARMOR checks & Plain checks & Ratio\\
\midrule
$1\%$ & $36/15$ & $663{,}865{,}000$ & $14{,}922{,}015$ & $44.49$\\
$10\%$ & $102/102$ & $4{,}378{,}853{,}540$ & $99{,}865{,}023$ & $43.85$\\
$50\%$ & $503/501$ & $27{,}146{,}103{,}715$ & $401{,}549{,}674$ & $67.60$\\
$90\%$ & $902/900$ & $44{,}623{,}013{,}705$ & $511{,}417{,}696$ & $87.25$\\
$100\%$ & $1000/1000$ & $46{,}152{,}558{,}665$ & $516{,}541{,}868$ & $89.35$\\
\bottomrule\end{tabular}
\end{table}
The nominal $1\%$ checkpoints are actual $3.6\%$ ARMOR and $1.5\%$ plain recovery and cannot be interpreted as a matched $1\%$-recovery cost ratio. All checkpoint ratios refer to the observed fractions in Table~\ref{tab:S_bounded_checkpoints}.

\textbf{Hardware and software.} A6000: NVIDIA RTX A6000, $48$\,GB, compute capability $8.6$, AMD Ryzen Threadripper 9960X host, driver $595.99.02$. A100: NVIDIA A100 $80$\,GB PCIe, compute capability $8.0$, AMD EPYC 7702 host, driver $595.71.05$. Both use Ubuntu 24.04.4, nvcc 12.0.140, and Hashcat 6.2.6. The GPU and host changes are recorded together; whole-pipeline differences cannot be attributed solely to GPU architecture.

\begin{table*}[!ht]
\centering\footnotesize
\caption{Full-recovery CUDA times in seconds, mean $\pm$ sample SD of three repeats. A/P ratios are ratios of means. ``Wall'' means the search loop, not the full attack.}
\label{tab:S_bounded_timings}
\setlength{\tabcolsep}{4pt}
\begin{tabular}{@{}llrrrrrr@{}}
\toprule
GPU & Engine & A kernel & P kernel & Kernel A/P & A wall & P wall & Wall A/P\\
\midrule
A6000 & Reference & $9.345592\pm0.028420$ & $0.081816\pm0.000541$ & $114.23$ & $9.349762\pm0.028486$ & $0.085902\pm0.000139$ & $108.84$\\
A6000 & Optimized & $9.228750\pm0.047172$ & $0.072967\pm0.000258$ & $126.48$ & $9.232960\pm0.047158$ & $0.077013\pm0.000347$ & $119.89$\\
A100 & Reference & $8.121694\pm0.000127$ & $0.073956\pm0.000808$ & $109.82$ & $8.132312\pm0.000080$ & $0.081686\pm0.001472$ & $99.56$\\
A100 & Optimized & $7.356568\pm0.000333$ & $0.061021\pm0.001759$ & $120.56$ & $7.369107\pm0.000682$ & $0.069140\pm0.002565$ & $106.58$\\
\bottomrule\end{tabular}\end{table*}

\textbf{Timing interpretation and preparation.} CUDA events sum kernel execution; wall time covers the recovery loop. CPU group-by and sort cost $50.6+55.9=106.5$\,s on the A6000 host and $97.1+136.6=233.7$\,s on the A100 host, excluding subsequent export overhead. These costs are excluded from Table~\ref{tab:S_bounded_timings}; the corpus-and-chain schedule can be reused for additional account sets under the same model, but such amortization must be explicit. Transfers are reported separately and partially overlap search-wall operations, so summing those counters is not an end-to-end measurement. Reference-to-optimized kernel speedups are $1.0127$ ARMOR/$1.1213$ plain on A6000 and $1.1040$/$1.2120$ on A100. The improvement depends on both input length and hardware. Repeat SD is within-run timing variation, not cross-sample uncertainty.

\textbf{External plain-only anchor.} Hashcat mode $10$ with optimized kernels and hexadecimal candidates correctly recovers all $1{,}000$ original plain passwords on both GPUs in the warm-up and each repeat. Its observed maximum candidate length is $31$ bytes for the optimized kernel and $256$ for the pure kernel; neither covers the full $400$-byte ARMOR schedule. The plain candidates are $6$--$25$ bytes. Hashcat measured full-process wall times are $1.452$, $2.454$, $1.456$\,s on A6000 and $6.645$, $6.634$, $6.681$\,s on A100. Its terminal progress is approximately $999$ million checks, versus $516.542$ million custom plain checks after adaptive target removal. Work, batching, startup and timing boundaries differ. This is an external-tool correctness and process-time anchor, not comparable-work optimized-throughput calibration. Terminal speed readings are not sustained throughput. No fastest-attacker or energy/cost claim follows.

\textbf{Complete ambiguity sets and posterior scoring.} For each recovered transformed value $z^*$, Stage~2 enumerates all $(p,K)$ pairs in the bounded pool satisfying $\mathrm{ESC}(T_K(p))=z^*$. Inverse enumeration attempts $1{,}000\times422=422{,}000$ chain inversions and checks membership in the same corpus, with a single corpus-frequency pass and forward verification of all $4{,}662$ surviving pairs. Brute-force enumeration independently produces the same sets. Candidate ranking uses only attacker-available data:
\[
\Pr(p\mid z^*)\ \propto\ f(p)\,\bigl|\{K\in\mathcal{V}_3:\mathrm{ESC}(T_K(p))=z^*\}\bigr|,
\]
where $f(p)$ is the bounded-corpus occurrence count; ties are broken by plaintext bytes. Ground truth is withheld from enumeration and scoring and used only for evaluation. This empirical posterior assumes the stated bounded password prior and independent uniform chains, not a universal password distribution.

\textbf{Simulated login-valid recovery and distinct cost axes.} The separate grader walks posterior-ordered plaintext guesses, transforms each under the account's actual assigned chain, and checks its actual salted verifier. All $1{,}000$ original passwords and login-valid passwords are recovered; $908$ sets contain only one distinct plaintext. Realized submissions to first login-valid success have mean $1.086$, median $1$, maximum $6$. The inverse Stage~2 pipeline takes $1.196958$\,s, including loading, inversions, corpus-frequency counting, forward checks and output; its brute-force reference takes approximately $24.4$\,s and is correctness corroboration, not the cheapest attack estimate. These are the retained October~10 same-target CPU results, reused by identical targets and recovered ranks/bytes in the GPU suites. They were measured on the Threadripper 9960X CPU, not on the A100 host. They are not new GPU-phase measurements. Simulated submission counts are distinct from offline compute seconds, and no production login latency, throttling, lockout or network cost was measured.

\textbf{Limits and evidence.} All targets are in-pool, so complete recovery and singleton certainty are conditional on the bounded pool. The salted-MD5 experiment does not establish a multiplier for other hashes/KDFs or broader corpora. It does not implement a GPU-cost-aware optimal schedule, a matched externally optimized full attack, or an isolated-pepper attack comparison. A high-entropy isolated pepper remains cryptographically stronger under DB-only compromise. Full authentication, coordinated migration, registry recovery and energy/monetary cost remain outside these measurements. Immutable source, input, export and binary hashes, raw batch/timing CSVs, Hashcat outputs accompany the retained evidence and are available on request. Archived records identify the original schedule binaries by hashes and byte sizes; those binaries are absent from the local evidence copy.

\raggedbottom

\FloatBarrier

\renewcommand{\IEEEbibitemsep}{-0.5pt plus .1pt minus .1pt}
\bibliographystyle{IEEEtran}  
\bibliography{main} 
\end{document}